\definecolor{ptblue}{RGB}{15,76,129} %% PANTONE 19-4052 Classic Blue in Year 2020.
\definecolor{ptemerald}{HTML}{009473} %% PANTONE 17-5641 TPX Emerald in Year 2013.
\definecolor{cobalt}{rgb}{0.0, 0.28, 0.67}
\renewcommand\@biblabel[1]{#1} 
\DeclareMathOperator*{\argmax}{arg\,max}
\newtheorem{claimN}{Claim}
\newtheorem{observation}{Observation}
\renewcommand{\@Opargbegintheorem}[4]{%
  #4\trivlist\item[\hskip\labelsep{#3#2\@thmcounterend}]}
\let\doendproof\endproof
\renewcommand\endproof{~\hfill\qed\doendproof}
\begin{document}

\title{Neighborhood Stability in Assignments on Graphs}

\author{Haris Aziz\inst{1} \and
Grzegorz Lisowski\inst{2} \and
Mashbat Suzuki\inst{1} \and Jeremy Vollen\inst{1}}

 \authorrunning{Aziz, Lisowski, Suzuki, and Vollen}

\institute{UNSW Sydney
\and
AGH University}

\maketitle

\begin{abstract}
We study the problem of assigning agents to the vertices of a graph such that no pair of neighbors can benefit from swapping assignments -- a property we term \emph{neighborhood stability}. 
We further assume that agents' utilities are based solely on their preferences over the assignees of adjacent vertices and that those preferences are binary. 
Having shown that even this very restricted setting does not guarantee neighborhood stable assignments, we focus on special cases that provide such guarantees. 
We show that when the graph is a cycle or a path, a neighborhood stable assignment always exists for any preference profile.
Furthermore, we give a general condition under which neighborhood stable assignments always exist. 
For each of these results, we give a polynomial-time algorithm to compute a neighborhood stable assignment.
\keywords{Stability  \and Seat Allocation }

\end{abstract}

%%%%%%%%%%%%%%%%%%%%%%%%%%%%%%%%%%%%%%%%%%%%%%%%%%%%%%%%%%%%%%%%%%%%%%%%

\section{Introduction}

An organization has drawn up a number of roles and decided which projects each role will work on \emph{a priori}.
Due to their passion for their organization's cause, the members of the organization are indifferent towards which projects they work on.
However, this indifference does not also apply to their feelings about their fellow team members: Each member is fond of some of their colleagues, but not all.
Furthermore, each member prefers roles in which they will work with more members they like.
How can the organization assign its members to the roles it has designed while avoiding the general chaos that may ensue if some pair of members prefer to swap roles?

The problem described above was first formalized by \citet{10.5555/3398761.3398979}.
Motivated by an analogous scenario in which a host must decide where to place each guest in a seating arrangement, they investigate the problem of assigning agents to a \emph{seat graph} in a manner that is \emph{stable} in the sense that no two agents would prefer to exchange assignments, i.e., no two agents form a \emph{blocking pair}.
The model has attracted follow-up work, resulting in a number of recent papers (see, e.g., \citet{DBLP:conf/ijcai/Ceylan0R23,berriaud2023stable,wilczynski2023ordinal}).
Besides the examples of assigning roles or seats, the problem can be summarized more broadly by its connection to hedonic games (see, e.g., \citet{AzSa15a,Cech08a}), in which agents are to be partitioned into disjoint coalitions according to their preferences over agents in their coalition. 
The literature on stability in hedonic games mostly focuses on complexity results, since stable outcomes are often not guaranteed to exist.
Indeed, \citet{berriaud2023stable1} recently investigated whether paths and/or cycles always admit a stable assignment in the seating arrangement model, and encountered numerous non-existence results, even under various preference restrictions.

Investigating the related topic of equilibria in Schelling games \citep{Schelling1969,Schelling1971}, \citet{bilo2022topological} found that restricting agents to local swaps may significantly enlarge the set of equilibria, implying equilibria existence for some graph classes which do not admit equilibria when allowing for arbitrary swaps.
In a similar vein, we focus in this work on a property we term \emph{neighborhood stability}, under which no two agents assigned to adjacent vertices form a blocking pair.
Neighborhood stability is quite natural in the seating arrangement and role assignment motivating settings.
Just as agents only derive utility from their neighbors in the seat graph because those are the individuals with which they can interact, it is intuitive to assume that swaps would only occur between neighboring agents due to the baseline level of interaction required to agree on a swap.
For instance, in a seating arrangement two agents that can communicate sufficiently to organize a swap must be seated close enough to enjoy each other's company (and thus must share an edge in the assignment).
In our role assignment motivating example, it may be that agents are only able to learn of others' roles by communicating with the other members they work with, and thus swaps can only happen between colleagues.   
In general, we note that neighborhood stability is especially well-motivated when the agents' information about the selected assignment is limited.

Searching for classes of instances for which stable assignments are guaranteed to exist, \citet{berriaud2023stable1} parameterize the problem by (1) the number of values an agent's utility for another agent can take on and (2) the number of classes of agents, where each agent belonging to the same class has identical preferences and is indifferent between any pair of agents belonging to the same class.

They show that even for the case of two-valued preferences and at least five classes, there are instances both for cycles and for paths that do not admit any stable assignments.\footnote{We note that the only positive stability result for an unrestricted number of agent classes due to \citet{berriaud2023stable1} comes as a result of weakening stability to only account for blocking pairs separated by a distance of at most two. Thus, neighborhood stability's approach to relaxing stability has precedent in our problem of interest.}

In this work, we allow for arbitrarily many classes of agents, and focus on existence of neighborhood stability rather than stability without restrictions on the location of the blocking pairs.
Repurposing an argument used by \citet{berriaud2023stable1}, it can be shown that there exist three-valued non-negative preferences for four agents such that no assignment is neighborhood stable on a cycle.\footnote{See the proof of Theorem 8 from \citet{berriaud2023stable1}. For $n=4$, every swap mentioned in the proof is between adjacent agents, and thus this instance fails neighborhood stability.}
Given this, we restrict our focus to \emph{binary preferences}.\footnote{It is without loss of generality to restrict two-valued non-negative preferences to binary preferences. And if we allow preferences that can take on negative values, it is easy to construct a counterexample of neighborhood stability on a path of length three. Specifically, let agents' utilities be either 1 or -1 and let the friendship graph be a cycle.}
Binary (sometimes referred to as \emph{dichotomous}) preferences can be interpreted naturally as characterizing each agent's preferences by a set of agents they `approve', and have been studied in various domains including committee voting \citep{LaSk23a}, matching (see, e.g., \citet{bogomolnaia2004random}), and item allocation (see, e.g., \citet{halpern2020fair}). 
Binary preferences allow us to represent agents' preferences simply as a directed graph, and as a result, our problem takes on the character of a seemingly fundamental graph theoretic problem.
As we will show, restricting to binary preferences still does not guarantee existence of neighborhood stability (see Figure~\ref{fig:counterexample} for an example with six agents).
Thus, in this paper, we investigate the following question:
\begin{quote}
	\centering
\emph{For which classes of seat graphs is a neighborhood stable assignment guaranteed to exist under binary preferences?}
\end{quote}

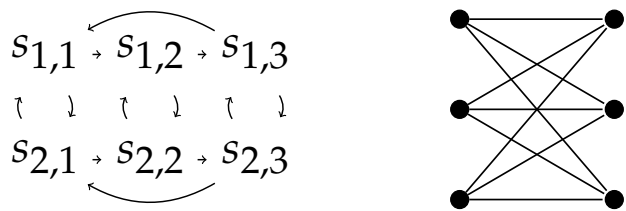
\begin{figure}[t]
\centering
\scalebox{0.7}{
\begin{tikzpicture}
         [->,shorten >=1pt,auto,node distance=2cm, semithick]
         \node[shape=circle] (A)  {\huge $s_{1,1}$};
         \node[shape=circle, below of=A] (B)  {\huge $s_{2,1}$};
         %\node[shape=circle,draw=black, fill=black, right of =A] (C)  {c};
         \node[shape=circle, right of =A] (C)  {\huge $s_{1,2}$};
         %\node[shape=circle,draw=black, fill=black, below of =C] (D)  {d};
         \node[shape=circle, below of =C] (D)  {\huge $s_{2,2}$};
         % \node[shape=circle, below of =D, yshift=1em] (D')  {};
         \node[shape=circle, right of =C] (E) {\huge $s_{1,3}$};
         \node[shape=circle, below of =E] (F)  {\huge $s_{2,3}$};
         
         % 2-cycles
         \draw [thick,->, bend left] (A) to  (B);
         \draw [thick,->, bend left] (B) to  (A);
         \draw [thick,->, bend left] (C) to  (D);
         \draw [thick,->, bend left] (D) to  (C);
         \draw [thick,->, bend left] (E) to  (F);
         \draw [thick,->, bend left] (F) to  (E);
         % top 3-cycle
         \draw [thick,->] (A) to  (C);
         \draw [thick,->] (C) to  (E);
         \draw [thick,->, bend right] (E) to  (A);
         % bottom 3-cycle
         \draw [thick,->] (B) to  (D);
         \draw [thick,->] (D) to  (F);
         \draw [thick,->, bend left] (F) to  (B);
\end{tikzpicture}}
~
\hspace{5em}
\scalebox{0.8}{
\begin{tikzpicture}
            [->,shorten >=1pt,auto,node distance=1.5cm, semithick]
          \node[shape=circle,draw=black, fill=black] (1)  {};

          \node[shape=circle, draw=black, fill=black, below of=1] (2)  {};
          \node[shape=circle, draw=black, fill=black, below of=2] (3)  {};
          
          \node[shape=circle,draw=black, fill=black, right=2.25cm of 1] (4)  {};
          \node[shape=circle,draw=black, fill=black, below of =4] (5)  {};
          \node[shape=circle,draw=black, fill=black, below of =5] (6)  {};
          
          \draw [thick,-] (1) to  (4);
          \draw [thick,-] (1) to  (5);        
          \draw [thick,-] (1) to  (6);

          \draw [thick,-] (2) to  (4);
          \draw [thick,-] (2) to  (5);        
          \draw [thick,-] (2) to  (6);
          
          \draw [thick,-] (3) to  (4);
          \draw [thick,-] (3) to  (5);        
          \draw [thick,-] (3) to  (6);
\end{tikzpicture}}
     \caption{A preference (left) and seat graph (right) with $n=6$ which admits no neighborhood stable assignment. Theorem~\ref{thm:negativeEx} generalizes this example to obtain a counterexample of arbitrary size.} \label{fig:negativeEx}
     \label{fig:counterexample}
\end{figure}

\paragraph{Contributions}
When the seat graph is a cycle and agents have binary preferences, \citet{berriaud2023stable} gave an instance with five agents for which no assignment is stable, even if only blocking pairs at distance at most two from each other are considered. 
In our central result, we show that if we marginally 
weaken this requirement to neighborhood stability, stable assignments are guaranteed to exist on cycles of arbitrary length. 
Furthermore, we give a polynomial-time algorithm to compute such assignments.
Our algorithm introduces a novel technique using path partitions of the preference profile, and may be of independent interest. 
When the seat graph is a path and agents have binary preferences, \citet{berriaud2023stable} showed that swap dynamics are guaranteed to converge to an assignment in which no blocking pair is within distance at most two of each other. 
However, they gave evidence that there may be instances for which convergence requires exponential time, and left open whether an assignment satisfying this property can be computed in polynomial time.
We answer this question affirmatively, giving a polynomial-time algorithm for such an assignment on path seat graphs.

We complement these results with a general counterexample which shows that neighborhood stable assignments are not always guaranteed to exist, even for instances with as few as six agents.
Despite this negative result, we give a positive result for a general family of seat graphs.
We give a polynomial-time algorithm to compute a neighborhood stable assignment provided the size of a directed feedback vertex set of the preference graph is upper bounded by the number of leaf nodes in the seat graph.
To give one example, this means that when the preference graph is planar, our algorithm computes a neighborhood stable assignment for any seat graph with at least 60\% leaf nodes.

\subsection*{Related Work}

The paper most relevant to our work is from \citet{berriaud2023stable}. 
On a related note, \citet{bullinger2024topological} studied stability of assignments in which agents' utilities depend on the distance to other agents. They focus on jump stability, wherein no agent benefits from moving to an unassigned vertex. In this setting, they show that a jump stable assignment always exists under acyclic preferences, or when they are symmetric. We note that symmetric preferences also guarantee neighborhood stability, which can be shown analogously to the reasoning by \citet{bullinger2024topological}. 

We note that several studies that are closely related to our investigation focused on computational complexity aspects of assignments. For instance,  \citet{10.5555/3398761.3398979} investigated the problem of computing stable seat arrangements, focusing on the (parametrized) complexity angle. Furthermore, recently \citet{DBLP:conf/ijcai/Ceylan0R23} provided a parametrized complexity study of stable seat assignments when agents' preferences are cardinal.
It is also worth noting that in their investigation of computational aspects of hedonic seat arrangements, \citet{wilczynski2023ordinal} focused on simple structures similar to those that we study, such as cycles or paths.

Next, we mention some other settings that have connections with our problem. First, we note that allocating seats to agents is closely connected to the well-studied problem of allocating indivisible items~(see, e.g., \cite{manlove2013algorithmics}). In particular, within the vast literature on this problem, emphasis has been put on agents' envy and existence of blocking pairs (see, e.g.,  \citet{DBLP:conf/ijcai/MassandS19}).
Another important area related to exchange stability of allocations is the
\emph{Schelling segregation model} \cite{Schelling1969,Schelling1971}.  There, a number of agents, assigned different types, is placed on a graph. Their utility is dependent on the proportion of their neighbors of the same type. 
Then, the designer's goal is to ensure that no pair has an incentive to exchange their assignments. We note that, in contrast to the Schelling model, in our case the agents are not assigned types and the preferences depend only on the number of approved neighbors.
Notably, similarly to our approach, \citet{bilo2022topological} explored the case in which agents can only swap with their neighbors within the context of the Schelling model. 

Furthermore, the model we consider is also a special case of the \emph{land allocation with friends} problem studied by \citet{EPTZ20a}.
There, seats correspond to land plots. However, their paper focuses on the complexity of computing welfare maximizing allocations and on achieving truthfulness, rather than preventing exchanges of assigned plots.

Another such topic concerns one of the classical problems related to stability is the \emph{stable marriage problem}, in which men and women are matched to each other so that no man and woman pair would prefer to leave their matching to be with each other (see, e.g., \citep{GaSh62a}, \cite{irving1994stable}). In a variation of this problem, the \emph{stable roommate problem} (see \citep{irving1985efficient}), where any pair of agents can be matched together, \citep{cechlarova2002complexity}, as well as \citep{cechlarova2005exchange} considered the concept of exchange stability, showing that it is NP-complete to check if an exchange-stable matching exists.  We note that this fact directly implies hardness of checking whether a stable assignment exists in our setting.

\section{Preliminaries}

Let $\mathcal{A}=\{1,...,n\}$ be a set of $n$ agents which need to be assigned to vertices in an undirected \textit{seat graph} $G=(V,E)$ where edges $(v_i,v_j)$ indicate if seats $v_i$ and $v_j$ are adjacent.  
We assume that $|V|=n$. 
For convenience, we denote $[t] = \{1,2,\ldots,t\}$ for any positive integer $t$.
We consider agents with binary preferences, wherein each agent $i$ \emph{approves} of some subset of the other agents in $\mathcal{A}$.
This preference structure is expressed by a directed preference graph $\mathcal{P}$ on $\mathcal{A}$, where $(i,j)$ is an arc in $\mathcal{P}$ if and only if agent $i$ approves agent $j$.  
For notational convenience, for each pair of agents $i,j\in \mathcal{A}$, we write $i\rightarrow j$ if $(i,j)$ is an arc in $\mathcal{P}$. 
  It is often convenient to represent non-arcs explicitly in our model. So, if $(i,j)$ is not an arc in $\mathcal{P}$, we write $i\nrightarrow j$ (sometimes we also write $j \nleftarrow i$~).  Given a directed path $P$ in the preference graph, denote $\mathsf{head}(P)$ as the initial vertex and $\mathsf{tail}(P)$ as the terminal vertex of the path. 
Directed path $P$ is maximal if it cannot be extended while remaining a path.

	\begin{definition} 
		A \textit{path partition} of a directed graph $\mathcal{D}$ is a collection $\{P_j\}_{j=1}^k$, where each $P_i$ is a directed path and each vertex of $\mathcal{D}$ belongs to exactly one path. 
	\end{definition}
	A path partition $\{P_j\}_{j=1}^k$ is \textit{minimal} if $\mathsf{tail}({P_i}) \nrightarrow \mathsf{head}({P_j}) $ for any $i,j\in [k], i \neq j$. 
	Note that any path partition can be made minimal by simply combining paths whenever $\mathsf{tail}({P_i}) \rightarrow \mathsf{head}({P_j}) $ for some $i,j\in [k]$.

Let us now define an \textit{assignment}, which is a bijection $\pi:\mathcal{A}\rightarrow V$ assigning each agent to a vertex on the seat graph. 
For convenience, given an agent $i$ and an assignment $\pi$ we will denote as $N_{\pi}(i)$ the set of agents neighboring $i$ in $\pi$. 
For a given assignment $\pi$, the utility of an agent is as follows: 
$$u_i(\pi) = \sum\limits_{ j\in N_\pi(i)  } \mathbb{I}[i\rightarrow j] $$
where $\mathbb{I}$ is an indicator function.
In other words, agent $i$'s utility under $\pi$ is the number of $i$'s approved neighbors.

Denote $\pi^{i\leftrightarrow j}$ as the assignment $\pi$ with $\pi(i)$ and $\pi(j)$ switched. 
$\pi^{i\leftrightarrow j}$ represents the result of what we will call a \emph{swap} between $i$ and $j$.
We say an agent $i$ \textit{envies} agent $j$ if $u_i(\pi)<u_i(\pi^{i\leftrightarrow j})$. 
In other words, $i$ strictly improves her utility by exchanging seats with $j$. 
We say that agents $i$ and $j$ form a \emph{blocking pair} if both $i$ envies $j$ and $j$ envies $i$. Then, an assignment $\pi$ is \textit{stable} if it contains no blocking pairs. 
In this paper we are interested in the restricted variant of stability, in which agents are only allowed to swap if they are assigned to adjacent vertices.

\begin{definition}[Neighborhood Stability]
    An assignment $\pi$ is \emph{neighborhood stable} if there is no blocking pair assigned to adjacent vertices under $\pi$.
\end{definition}

Note that any blocking pair must first agree to swap, which requires that the two agents in the blocking pair interact.
Thus, neighborhood stability is well-motivated in any setting in which edges in the seat graph characterize potential interactions (e.g., the seating arrangement application). 

We now illustrate our setting by way of an example.

\begin{example}\label{ex:Initial}

Consider the instance shown in Figure \ref{fig:initialEx}.  
The preference graph shows that agents $a$ and $c$ approve each other, as do $b$ and $d$.
Moreover, $d$ approves $c$ while $b$ approves $a$.
The seat graph is a path with four vertices.

\begin{figure}[h]
  \centering
  \scalebox{0.8}{
     \begin{tikzpicture}
         [->,shorten >=1pt,auto,node distance=1.2cm, semithick]
         \node[shape=circle] (A)  {\Large $a$};
         \node[shape=circle, below of=A] (B)  {\Large $b$};
         %\node[shape=circle,draw=black, fill=black, right of =A] (C)  {c};
         \node[shape=circle, right of =A] (C)  {\Large $c$};
         %\node[shape=circle,draw=black, fill=black, below of =C] (D)  {d};
         \node[shape=circle, below of =C] (D)  {\Large $d$};
         \node[shape=circle, below of =D, yshift=1em] (D')  {};

         \draw [thick,->, bend left] (A) to  (C);
         \draw [thick,->, bend left] (B) to  (D);
         \draw [thick,->, bend left] (D) to  (B);
         \draw [thick,->, bend left] (C) to  (A);
         \draw [thick,->] (B) to  (A);
         \draw [thick,->] (D) to  (C);
         \draw [thick,->] (D) to  (C);
     \end{tikzpicture}}
    ~
    \hspace{4em}
  \scalebox{0.8}{
    \begin{tikzpicture}
        [->,shorten >=1pt,auto,node distance=1.2cm, semithick]
        \node[shape=circle,draw=black, fill=black] (A)  {};

        \node[shape=circle, above of=A, yshift=-1em] (A')  {\Large $a$};

        \node[shape=circle, draw=black, fill=black, right of=A] (B)  {};
        \node[shape=circle, above of=B, yshift=-1em] (B')  {\Large $d$};

        \node[shape=circle,draw=black, fill=black, right of =B] (C)  {};

        \node[shape=circle, above of=C, yshift=-1em] (C')  {\Large $b$};
        %\node[shape=circle, right of =A] (C)  {\huge c};
        \node[shape=circle,draw=black, fill=black, right of =C] (D)  {};
        \node[shape=circle, above of =D, yshift=-1em] (D')  {\Large $c$};

        \node[shape=circle, below of =D, yshift=-0em] (D')  {};

        \draw [thick,-] (A) to  (C);
        \draw [thick,-] (B) to  (D);
        \draw [thick,-] (D) to  (C);
        \draw [thick,-] (C) to  (A);
        %\draw [thick,-] (A) to  (D);

    \end{tikzpicture}}
    % \`vspace'{-0.5em}
     \caption{Example of a preference graph and an assignment.}\label{fig:initialEx}
    \end{figure}
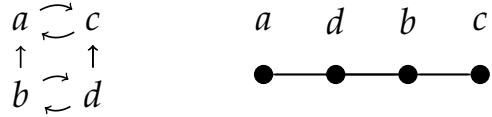

    Consider now the assignment $\pi$ shown in the right side of Figure \ref{fig:initialEx}. 
    Notice that $u_b(\pi) = u_d(\pi)=1$. 
    However, in the assignment $\pi^{b \leftrightarrow d}$, we have that both $b$ and $d$ improve their utility to two. 
    Hence, $b$ and $d$ form a blocking pair under $\pi$.
    Since $\pi(b)$ and $\pi(d)$ are adjacent in the seat graph, this means $\pi$ is not neighborhood stable.
    On the other hand, observe that $\pi^{b \leftrightarrow d}$ is stable, and thus also neighborhood stable.%, and $\textit{SW}(\pi^{b \leftrightarrow d}) = 4$. 
\end{example}

\section{Non-Existence Results }
\label{sec:nonexistence}

As stable assignments are not guaranteed to exist, and we are studying a weakening of stability, a natural first question is whether neighborhood stable assignments always exist.  Toward this end, one promising observation is that   there are instances admitting neighborhood stable assignment for which every possible assignment is not stable, as shown by the following example.

\begin{example}\label{ex:adjacentharder}
  Take the seat graph consisting of two disjoint triangles, as depicted in Figure \ref{fig:adjacentharder}. 
  Also, let $\mathcal{A}=\{1,2,\ldots,6\}$, with the preference graph forming a directed cycle.
  As shown in, e.g., Example 2 in~\citet{DBLP:conf/ijcai/MassandS19}, this instance does not admit a stable assignment.
  To see why, note that any pair of agents assigned to distinct triangles who both have zero utility constitute a blocking pair, and in every assignment at least one such pair must exist.
  Nevertheless, one can check that every assignment in this instance is neighborhood stable. 
\begin{figure}[t]
  \centering
  \vspace{-0.1em}
    \scalebox{0.8}{
     \begin{tikzpicture}
         [->,shorten >=1pt,auto,node distance=1.2cm, semithick]
         \node[shape=circle] (A)  {\Large $a$};
         \node[shape=circle, right of =A] (B)  {\Large $b$};
         %\node[shape=circle,draw=black, fill=black, right of =A] (C)  {c};
         \node[shape=circle, below right of =B] (C)  {\Large $c$};
         %\node[shape=circle,draw=black, fill=black, below of =C] (D)  {d};
         \node[shape=circle, below left of =C] (D)  {\Large $d$};
         \node[shape=circle, left of =D] (E)  {\Large $e$};
         \node[shape=circle, above left of =E, xshift=-.3em] (F)  {\Large $f$};
         % \node[shape=circle, below of =D, yshift=1em] (D')  {};

         \draw [thick,->] (A) to (B);
         \draw [thick,->] (B) to (C);
         \draw [thick,->] (C) to (D);
         \draw [thick,->] (D) to (E);
         \draw [thick,->] (E) to (F);
         \draw [thick,->] (F) to (A);
     \end{tikzpicture}}
    ~
    \hspace{4em}
  \scalebox{0.8}{
    \begin{tikzpicture}
             [->,shorten >=1pt,auto,node distance=1.8cm,semithick]
             \node[shape=circle, draw=black, fill=black] (A)  {};
             \node[shape=circle, draw=black, fill=black, right of = A] (B)  {};
             \node[shape=circle, draw=black, fill=black, right of = A, above of = A, xshift=-2.5em] (C)  {};
             
             \draw [thick,-] (A) to  (B);
             \draw [thick,-] (B) to  (C);
             \draw [thick,-] (A) to  (C);

             [->,shorten >=1pt,auto,node distance=2cm, semithick]
             \node[shape=circle, draw=black, fill=black, right of = B] (D)  {};
             \node[shape=circle, draw=black, fill=black, right of = D] (E)  {};
             \node[shape=circle, draw=black, fill=black, right of = D, above of = D, xshift=-2.5em] (F)  {};
             \draw [thick,-] (D) to  (E);
             \draw [thick,-] (E) to  (F);
             \draw [thick,-] (D) to  (F);
    \end{tikzpicture}}
    \caption{Example of an instance for which any assignment fails stability, but satisfies neighborhood stability.}\label{fig:adjacentharder}
  \end{figure}
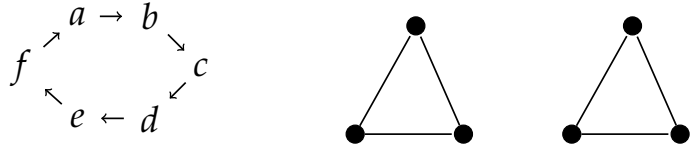
This holds because any blocking pair must be adjacent and thus belong to the same triangle, but a swap will not change either agent's neighborhood, leaving their utilities unchanged.
\end{example}

Despite the positive message of  Example~\ref{ex:adjacentharder}, in the following result, we show that this is not always the case. In particular, there are instances for which every assignment does not satisfy neighborhood stability. 
To show that, we give a general counterexample using balanced complete bipartite seat graphs to construct a family of instances on which neighborhood stable assignments do not exist. 
Figure \ref{fig:counterexample} depicts an instance from this family with only six agents.

\begin{theorem}\label{thm:negativeEx}
For every $n=2t$, where $t\geq 3$ is an odd integer, there is an instance with $n$ agents for which no neighborhood stable assignment exists.
\end{theorem}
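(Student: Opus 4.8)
The plan is to work with the family of instances generalizing Figure~\ref{fig:counterexample}: the $n=2t$ agents are $t$ mutually-approving \emph{pairs} $\{s_{1,i},s_{2,i}\}$, the ``top'' agents $s_{1,1},\dots,s_{1,t}$ form a directed cycle (each $s_{1,i}$ approving $s_{1,i+1}$, indices mod $t$), the ``bottom'' agents $s_{2,1},\dots,s_{2,t}$ form a directed cycle likewise, and the seat graph is the balanced complete bipartite graph $K_{t,t}$. The first step is to observe that, since every vertex on one side of $K_{t,t}$ is adjacent to exactly the vertices on the other side, the utility of an agent — and hence whether an assignment admits an adjacent blocking pair — depends only on the partition $(L,R)$ of the agents into the two sides, each of size $t$. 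So neighborhood stability becomes a property of this bipartition. Because each agent approves exactly two others, for an agent on side $S$ I would define its \emph{internal count} as the number of its approved agents lying on $S$, and call it \emph{happy}, \emph{neutral}, or \emph{unhappy} according as this count is $0,1,2$. A short computation of the effect of a swap then shows a happy agent envies nobody on the other side, a neutral agent envies exactly its unique approved agent on the other side, and an unhappy agent envies everybody on the other side.

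From this classification I read off the only possible adjacent blocking pairs: (I) a \emph{split} pair (its two members on opposite sides) with both members neutral; (II) a neutral agent together with an unhappy agent it approves on the opposite side; and (III) two unhappy agents on opposite sides. Assume toward a contradiction that some bipartition is neighborhood stable. Type (III) immediately forces all unhappy agents (if any) onto a single side, say $L$.

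The core is a parity/counting contradiction built on one identity. Writing $x_i,y_i\in\{0,1\}$ for the sides of $s_{1,i},s_{2,i}$, the ``top arc $i$ is cut'' indicator $\alpha_i=x_i\oplus x_{i+1}$, the ``bottom arc $i$ is cut'' indicator $\beta_i=y_i\oplus y_{i+1}$, and the ``pair $i$ is split'' indicator $d_i=x_i\oplus y_i$ satisfy $\alpha_i\oplus\beta_i=d_i\oplus d_{i+1}$. I first dispatch the case where every pair is split ($d_i=1$ for all $i$): the identity gives $\alpha_i=\beta_i$, while avoiding a type-(I) pair forces $\alpha_i=1$ or $\beta_i=1$, so $\alpha_i=\beta_i=1$ for all $i$; that is, the top cycle is properly $2$-colored, which is impossible since $t$ is odd. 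To rule out any \emph{together} pair I prove the key lemma that in a stable bipartition every together pair lies on side $L$: at the trailing end of a maximal run of consecutive together pairs the identity yields $\alpha_i\ne\beta_i$, so exactly one agent there is unhappy, and by (III) it sits on $L$; avoiding type-(II) pairs then lets me propagate backwards along the run — using that a together-pair agent approves its partner on its own side and so is never happy — to conclude the whole run lies on $L$. Finally, with all $g$ together pairs on $L$ and each of the $s=t-g$ split pairs contributing one agent to $L$, balance gives $t=|L|=2g+s=g+t$, forcing $g=0$; hence every pair is split and we are back in the already-contradicted case.

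The step I expect to be the main obstacle is the key lemma (all together pairs on $L$): the backward propagation must be arranged so that each step produces a \emph{fresh} unhappy agent whose cycle-predecessor is then pinned down by the type-(II) condition, and the $s_1$- and $s_2$-cycle versions must be handled symmetrically depending on which member of the boundary pair is unhappy. The remaining ingredients — the reduction to bipartitions, the envy classification, and the $\mathbb{F}_2$ identity — are routine once stated.
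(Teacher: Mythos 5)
Your proof is correct. It uses the same instance ($K_{t,t}$ with $t$ mutually-approving pairs plus two directed $t$-cycles), the same reduction to bipartitions, and the same envy conditions as the paper (your happy/neutral/unhappy trichotomy is exactly the paper's sufficient conditions (C1) and (C2), packaged as a complete classification), but the combinatorial core is organized differently. For the all-split case, the paper applies pigeonhole to find $\lceil t/2\rceil$ same-category agents on one side and invokes the independence number of an odd cycle to exhibit one uncut arc, hence a type-(I) pair; your $\mathbb{F}_2$ identity instead shows that avoiding type-(I) pairs forces every top and bottom arc to be cut, i.e., a proper $2$-coloring of an odd cycle --- the same parity fact, used contrapositively. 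The real divergence is the together-pair case. The paper locates a together pair $j$ whose cyclic successor is split, deduces an unhappy agent, argues that every together pair on the opposite side $R$ must be succeeded by a together pair on $L$, and uses the equality of together-pair counts on the two sides to place a together pair at index $j-1$ in $R$, yielding an explicit type-(II) blocking pair. You instead prove the stronger structural statement that \emph{all} together pairs lie on the side $L$ containing the unhappy agents, and close with the counting identity $t=|L|=2g+(t-g)$, forcing $g=0$ and collapsing into the already-settled all-split case. Your route trades the paper's explicit blocking pair for a global counting contradiction; the paper's route avoids needing your backward-propagation lemma.

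On the step you flagged as the main obstacle: the propagation does go through, and is easier than you fear. At the trailing end $i$ of a maximal run the identity gives $\alpha_i\neq\beta_i$, so exactly one member of pair $i$ is unhappy (hence pair $i$ sits on $L$); say it is the $s_2$-member. If pair $i-1$ sat on $R$, then $s_{2,i-1}$ --- which approves its partner on its own side, is therefore neutral, and has $s_{2,i}$ as its unique approved agent across the cut --- would form a type-(II) pair with the unhappy $s_{2,i}$. Hence pair $i-1$ is on $L$, and now \emph{both} members of pair $i-1$ have both approved agents (partner and cycle-successor) on $L$, so both are unhappy; the step to pair $i-2$ can therefore use either cycle, and the asymmetric case analysis is needed only at the very first step. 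One edge case deserves a line: if every pair is together ($g=t$), maximal runs have no trailing end and the lemma's proof does not launch, but then $|L|$ is even while $t$ is odd, an immediate contradiction (the same parity observation the paper uses to guarantee a split pair exists).
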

\begin{proof}
    
    We consider an instance in which the seat graph is a complete bipartite graph with $t$ vertices in each partition, i.e., $K_{t,t}$.
    For convenience, we label the agents by $s_{ij}$ for each $i\in \{1,2\}$ and each $j\in [t]$. 
    Note that, for the purposes of this proof, $s_{i(t+1)}$ should be interpreted as $s_{i1}$ for $i\in \{1,2\}$.
    The preference graph $\mathcal{P}$ has the following arcs:
    \begin{itemize}
        \item $(s_{ij}, s_{kj})$ for all $j\in [t]$ and $i,k\in \{1,2\}$, $i\neq k$
        \item For each $i\in \{1,2\}$, $(s_{ij}, s_{i(j+1)})$ for all $j\in [t]$
    \end{itemize}
    Thus, $\mathcal{P}$ can be thought of as $t$ disjoint directed two-cycles and two disjoint directed $t$-cycles.
    The seat graph for six agents ($K_{3,3}$) and its associated preference graph, as described above, are pictured in Figure \ref{fig:negativeEx}.

    We point out that each agent $i$'s utility depends only on the set of agents assigned to the partition they are not assigned to.
    Thus, we can describe every assignment $\pi$ by the agents making up each partition, which we will denote $L$ and $R$. 
    We will also use $N'_\pi(i)=\mathcal{A}\setminus \{N_\pi(i)\cup \{i\}\}$ to denote the \emph{other} agents in $i$'s partition under $\pi$.
    We also make note of the fact that every vertex has an out-degree of two in $\mathcal{P}$ and thus each agent's maximum utility is two under any $\pi$. 
    Consider two sufficient conditions for an agent $i$ to envy agent $j$:

    \begin{enumerate}[{(C1)}]
        \item \label{cond:one} 
        \emph{Agent $i$ approves two agents from $N'_\pi(i)$ and $j\in N_\pi(i)$.}
        Agent $i$ envies agent $j$ because their out-degree in $\mathcal{P}$ is two, and thus they approve none of their neighbors under $\pi$. This means $u_i(\pi^{i\leftrightarrow j})=2 > 0 = u_i(\pi)$.
        \item \label{cond:two}
        \emph{Agent $i$ approves one of the agents in $N'_\pi(i)$, $j\in N_\pi(i)$, and $(i,j)$ is an arc in $\mathcal{P}$.} 
       	Agent $i$ envies agent $j$ since $|N_{\pi^{i\leftrightarrow j}}(i)| = |N'_\pi(i)\cup \{j\}| = 2 > 1 = u_i(\pi)$.
    \end{enumerate}
    
    We now show that, under any assignment, there always exists a pair of agents in opposite partitions who form a blocking pair.
    We begin with the case in which a partition contains a pair of agents forming a 2-cycle in the preference graph. 
    
    {\paragraph*{Case 1}: $s_{1j'}$ and $s_{2j'}$ are assigned to the same partition under $\pi$ for some $j'\in [t]$.}

\medskip

    First observe that each partition must contain an equal number of directed 2-cycles from $\mathcal{P}$.
    This fact holds since each directed 2-cycle which is not entirely contained in a partition is split evenly between the partitions, and thus the remaining directed 2-cycles must also be evenly distributed between partitions to ensure the partitions are of equal size.
    Since $t$ is odd, this implies the existence of a 2-cycle which is split between the partitions, i.e., $s_{1k}\in N_\pi(s_{2k})$ for some $k\in [t]$.

    Since there is at least one 2-cycle assigned to the same partition and there is at least one 2-cycle split between partitions, there exists an index $j\in [t]$ such that $s_{1j}$ and $s_{2j}$ are assigned to the same partition and $s_{1(j+1)}, s_{2(j+1)}$ are assigned to opposite partitions.
	Assume without loss of generality that $s_{1j}, s_{2j}, s_{1(j+1)}\in L$ and $s_{2(j+1)}\in R$.
    Note that, because $s_{2j}, s_{1(j+1)}\in N'_\pi(s_{1j})$, it stands that agent $s_{1j}$ envies every agent assigned to $R$ under $\pi$ due to (C\ref{cond:one}).

    We can assume henceforth that every agent $i$ assigned to $R$ under $\pi$ approves at least one agent in the opposite partition, i.e., $|N_\pi(i)|\geq 1$. 
    This holds since otherwise $i$ would envy $s_{1j}$ by (C\ref{cond:one}) and thus the two agents would form a blocking pair.
    Let $s_{1l}, s_{2l}$ be some 2-cycle in $\mathcal{P}$ assigned to $R$ under $\pi$ (which we know must exist as there is an equal number of 2-cycles assigned to each partition). 
    It is clear that both $s_{1(l+1)}$ and $s_{2(l+1)}$ must be assigned to $L$ under $\pi$ since otherwise $|N_\pi(s_{il})|=0$ for some $i\in \{1,2\}$. 
    In words, this means each 2-cycle assigned to $R$ under $\pi$ must be followed by a 2-cycle which is assigned to $L$ under $\pi$ as we increase our index.

    Since this holds for any 2-cycle assigned to $R$ under $\pi$, and each partition includes an equal number of 2-cycles, it must be that $s_{1(j-1)}, s_{2(j-1)}\in R$. But this means $s_{1(j-1)}$ envies $s_{1j}$ by (C\ref{cond:two}) since $s_{2(j-1)}\in N'_\pi(s_{1(j-1)})$ and $(s_{1(j-1)}, s_{1j})$ is an arc in $\mathcal{P}$. Thus, $s_{1(j-1)}$ and $s_{1j}$ form a blocking pair.

    {\paragraph*{Case 2}: $s_{1j}$ and $s_{2j}$ are assigned to opposite partitions under $\pi$ for all $j\in[t]$.}
    
    \medskip
    
    We will refer to agents of the form $s_{1j}$ as top agents and agents of the form $s_{2j}$ as bottom agents. 
    Since each agent is either top or bottom and a partition is composed of $t$ agents, it holds that in any partition there must be at least $\lceil t/2 \rceil$ agents from the same category.
    Also, since these agents belong to a directed $t$-cycle in $\mathcal{P}$, it must be that they induce at least one arc in $\mathcal{P}$. 
    This holds since the largest independent set of a $t$-cycle is of size $\lfloor t/2 \rfloor$ and $\lceil t/2 \rceil > \lfloor t/2 \rfloor$ since $t$ is odd.
    Assume, without loss of generality, that $s_{11}, s_{12}\in L$. 
    By our case assumption, $s_{21},s_{22}\in R$.
    It is immediate by (C\ref{cond:two}) that $s_{11}$ and $s_{21}$ envy each other.    
\end{proof}

\section{Neighborhood Stability on Cycles and Paths}
\label{sec:cycle_path}

Having shown that neighborhood stable assignments do not always exist, we turn next to the natural restricted cases studied extensively by \citet{berriaud2023stable} in the seat arrangement setting: cycle and path seat graphs.
A natural approach to proving existence of a neighborhood stable assignment is to use a potential function argument.
We begin by showing that this approach cannot work when the seat graph is a cycle.

We refer to \emph{swap dynamics} as a procedure which begins with an arbitrary assignment and allows adjacent blocking pairs to swap until no such pairs remain. 
We show next that this approach cannot work when the seat graph is a cycle. The proof is deferred to the the appendix.

	\begin{proposition} \label{prop: SwapDyn}
	There exists an instance and an assignment on that instance from which no swap dynamics converge, even when the seat graph is a cycle of length 4. 
	\end{proposition}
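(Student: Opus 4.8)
The plan is to exhibit a single concrete instance on the $4$-cycle $C_4$ together with a starting assignment, and then argue that the swap dynamics is trapped in a $2$-cycle of assignments that avoids every neighborhood stable assignment. I would take the four agents $\{1,2,3,4\}$ and let the preference graph $\cP$ be the directed $4$-cycle $1\to 2\to 3\to 4\to 1$, so that each agent approves exactly one other agent and hence has maximum possible utility $1$.

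The key structural observation I would establish first is that on $C_4$ every vertex has exactly two neighbors and one non-neighbor (its antipode), so an assignment is completely determined, as far as utilities are concerned, by the unordered partition of the four agents into the two antipodal pairs. There are exactly three such partitions, which I will call $P=\{\{1,2\},\{3,4\}\}$, $Q=\{\{1,3\},\{2,4\}\}$, and $R=\{\{1,4\},\{2,3\}\}$. Since each agent approves a single agent, its utility is $1$ if that approved agent is a neighbor and $0$ if it is its antipode. A short computation then yields the utility profiles $(u_1,u_2,u_3,u_4) = (0,1,0,1)$ on $P$, $(1,1,1,1)$ on $Q$, and $(1,0,1,0)$ on $R$; in particular $Q$ is (neighborhood) stable, consistent with the existence guarantee, while $P$ and $R$ each contain two agents of utility $0$.

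Next I would record how a swap of two adjacent agents acts on these partitions: swapping adjacent $a,b$ exactly exchanges their antipodes, which lets me tabulate, for each of $P$, $Q$, $R$, which target partition a given adjacent swap produces. Using this table together with the utility profiles, the heart of the argument is a finite case check: in $P$ the unique blocking pair is $\{1,3\}$ and swapping it yields $R$, while in $R$ the unique blocking pair is $\{2,4\}$ and swapping it yields $P$. The crucial point to verify carefully is that \emph{every other} adjacent pair in $P$ and in $R$ fails to be a blocking pair, and in particular that the adjacent pairs whose swap would lead to the stable partition $Q$ are not blocking, because in each such case the relevant agent already has utility $1$ and so does not strictly improve. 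Establishing these non-blocking claims for all remaining adjacent pairs is the main obstacle, since it requires getting the antipode bookkeeping exactly right for each class.

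Finally, I would conclude by starting from any assignment in class $P$ (for concreteness, the cyclic arrangement $1,3,2,4$ around $C_4$). Since $P$ is not stable, some swap must occur; by the case check the only possible swap leads to $R$, whose only possible swap leads back to $P$, and so on. Thus every maximal sequence of adjacent blocking-pair swaps from this assignment oscillates between classes $P$ and $R$ and never reaches the stable class $Q$; as the state space is finite, no swap dynamics from this assignment converges.
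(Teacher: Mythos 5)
Your proof is correct, and it follows the same overall strategy as the paper's: exhibit a four-agent instance on $C_4$ in which, at every step, the unique adjacent blocking pair forces a swap that returns the dynamics to an earlier state. The differences are in the concrete instance and the bookkeeping. The paper uses the preference graph $a\to b\to d\to c\to a$ together with an extra chord $a\to d$, and verifies a four-step cycle of assignments $\pi\to\pi_1\to\pi_2\to\pi_3\to\pi$ by directly listing each agent's neighborhood at each step. You use the pure directed $4$-cycle $1\to2\to3\to4\to1$ and exploit the observation that on $C_4$ utilities depend only on the partition into antipodal pairs, which collapses the state space to three classes $P,Q,R$ and reduces the verification to checking four adjacent pairs in each of $P$ and $R$; your computed utility profiles $(0,1,0,1)$, $(1,1,1,1)$, $(1,0,1,0)$ and the swap table are all correct, the unique blocking pairs are $\{1,3\}$ in $P$ and $\{2,4\}$ in $R$, and the resulting oscillation $P\to R\to P\to\cdots$ indeed never reaches $Q$. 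Your antipodal-partition reduction is arguably cleaner and makes the exhaustiveness of the case check more transparent, at the cost of being specific to $C_4$ (where antipodes determine utilities); the paper's direct enumeration is more pedestrian but generalizes in form to other seat graphs.
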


We note that even though the instance used to prove Proposition~\ref{prop: SwapDyn} does not converge to a neighborhood stable assignment through swap dynamics, it does admit a neighborhood stable assignment. In the next section we address the problem of whether this is the case for all instances where the seat graph is a cycle.\footnote{For example, consider any assignment under which agent $a$ neighbors $b$ and $c$, i.e., $N(a) = \{b,c \}$.} 

\subsection{Cycle Seat Graphs}
In what constitutes our central result, we will establish that a neighborhood stable assignment always exists when the seat graph is a cycle.
\begin{theorem}\label{thm: Cycle}
	A neighborhood stable assignment always exists and can be computed in polynomial time $O(n^3)$ when the seat graph is a cycle.
\end{theorem}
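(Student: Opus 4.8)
The plan is to first obtain a clean local characterization of adjacent blocking pairs on a cycle, and then to realize a neighborhood stable assignment as a suitable concatenation of directed paths of the preference graph $\mathcal{P}$. Orient the cycle and, for an agent $x$, write $L(x)$ and $R(x)$ for its counterclockwise and clockwise neighbors. Computing the utilities before and after a swap, an adjacent pair appearing as $a, i, j, b$ read clockwise (so $i = L(j)$, $j = R(i)$, $a = L(i)$, $b = R(j)$) is blocking if and only if $i \to b$, $i \nrightarrow a$, $j \to a$, and $j \nrightarrow b$. The crucial consequence I would extract is the necessary condition $j \nrightarrow R(j)$: a blocking pair can only occur when its clockwise member $j$ disapproves its own clockwise neighbor.

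Next I would build the assignment from a minimal path partition $\{P_1, \dots, P_k\}$ of $\mathcal{P}$, which exists and is computable in polynomial time by repeatedly combining paths as noted after the definition of path partition. I place the agents around the cycle by concatenating the paths, each traversed in the direction of its arcs, in a cyclic order to be fixed later. In this layout every agent other than a path tail approves its clockwise neighbor (it is its successor along its path), and minimality guarantees $\mathsf{tail}(P_i) \nrightarrow \mathsf{head}(P_{i+1})$ at each junction. By the characterization above, the only possible blocking pairs are therefore those of the form $(L(t), t)$ with $t = \mathsf{tail}(P_i)$, so there are at most $k$ candidates, one per path.

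The heart of the argument is to eliminate these remaining candidates. For a tail $t = \mathsf{tail}(P_i)$ with $|P_i| \ge 2$, writing $s = L(t)$ for the penultimate vertex, the pair $(s,t)$ is blocking only if $s \to \mathsf{head}(P_{i+1})$ holds together with two conditions internal to $P_i$; those internal conditions in fact force the last three vertices of $P_i$ to form a directed triangle. I would then show that one can always choose the cyclic order of the paths — and, where necessary, locally restructure a path whose end sits on such a triangle by rotating it along the triangle's back-arc — so that every candidate pair fails at least one blocking condition, typically by arranging $s \nrightarrow \mathsf{head}(P_{i+1})$. Establishing that such a global ordering always exists is the main obstacle, and I expect it to require the bulk of the work; the special cases of singleton and length-two paths, where several blocking conditions become order-dependent simultaneously, are where I anticipate the most delicate bookkeeping.

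Finally, for the running time I would verify that computing the minimal path partition, searching for the admissible cyclic order, and performing any local path restructuring are each polynomial, yielding the claimed $O(n^3)$ bound. In the write-up I would present the characterization and the reduction to tails (the first two steps) as short lemmas, and concentrate the technical effort on the ordering argument of the third step, which is the genuinely novel ingredient.
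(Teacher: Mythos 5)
Your first two steps are sound and correspond exactly to the paper's setup: your local characterization of adjacent blocking pairs is the content of Observations~\ref{obs: cond1} and~\ref{obs: cond2}, and your reduction to pairs of the form (penultimate vertex, tail) inside a path of the partition is precisely what the paper establishes in its Type~I/Type~II analysis (Type~II junction pairs are killed by minimality, as you note). Your observation that the offending configuration forces a directed triangle at the end of the path is also correct. However, the proof stops exactly where the real difficulty begins: you explicitly defer the claim that ``one can always choose the cyclic order of the paths, plus local rotations, so that every candidate pair fails at least one blocking condition,'' and you offer no argument for it. This is not a routine bookkeeping step; it is the entire theorem. As stated, the rotation you propose is not even well defined as a path operation: if $P_i = \cdots \to w \to u \to s \to t$ ends in the triangle $u \to s \to t \to u$, cyclically shifting the last three vertices requires $w$ to approve the new third-from-last vertex, which is not guaranteed, so ``rotating along the back-arc'' generally destroys the directed path.

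More importantly, your search space --- reorderings and local rotations of one \emph{fixed} minimal path partition --- is likely too small. The paper's resolution of exactly this obstacle works differently on two counts. First, it splits on the preference structure: if there exist $s,t,w$ with $s \to w \leftarrow t$ and no arc between $s$ and $t$, it abandons path partitions entirely and uses a direct greedy seating (Case~1); only in the complementary case does the path-partition machinery apply, because only there does the key insertion property hold (Claim~\ref{lem:pathext}: any agent approving the tail of a path can be spliced into that path). Second, even in that case, a blocking pair is not repaired by reordering but by \emph{merging} two paths of the partition into one (Lemma~\ref{lem: reass}), which strictly increases the potential $|\{v_j : \pi^{-1}(v_j) \to \pi^{-1}(v_{j+1})\}|$ and hence terminates within $n$ rounds. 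Your plan has no analogue of the case split, no insertion lemma to justify the restructuring, and no termination/progress measure; to complete it you would either need to prove that a single admissible cyclic order always exists for some partition reachable by your local moves (which I do not believe follows from what you have), or adopt something like the paper's potential argument, at which point you would also need the Case~1/Case~2 dichotomy to make the insertion step go through.
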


We point out that Theorem~\ref{thm: Cycle} is a surprising result given the findings of \citet{berriaud2023stable}. In particular, if we additionally consider blocking pairs distance two from each other, even in an instance with five agents, a stable assignment may not exist on a  cycle. Furthermore, as mentioned in the introduction, if we consider cardinal preferences, even with as few as four agents, a neighborhood stable assignment may not exist. 

We first give some intuition behind Algorithm~\ref{alg:cycle}, which is used to prove Theorem~\ref{thm: Cycle}.
	At a high level, our algorithm finds a minimal path partition and uses it to assign agents to the seat graph. If the resulting assignment is not neighborhood stable, we show that there is a different minimal path partition whose corresponding assignment has a strictly higher number of agents who approve an agent assigned immediately to their right. As the number of such agents is at most $n$ (which only happens when  the friendship graph contains a directed cycle as a subgraph), the algorithm is guaranteed to terminate on a neighborhood stable assignment.

We now define some useful notation. We will denote a seat graph with $n$ verties that is a cycle as $\mathcal{C}_n$. Further, given a path partition $\{P_j\}_{j=1}^k$, denote $\Phi(P_1,...,P_k)$ as the assignment that maps agents in $P_1$ to nodes $v_1,...,v_{|P_1|}$, agents in $P_2$ to nodes $v_{|P_1|+1},...,v_{|P_1|+|P_2|}$, and so forth. Refer to Figure~\ref{fig:ass} for an illustration of assignment $\Phi(P_1,\ldots,P_k)$.
Before proceeding with the proof of Theorem~\ref{thm: Cycle}, we make the following observations, which give sufficient conditions for blocking pairs.
\begin{figure}[t]
	\centering
	\includegraphics[scale=0.15]{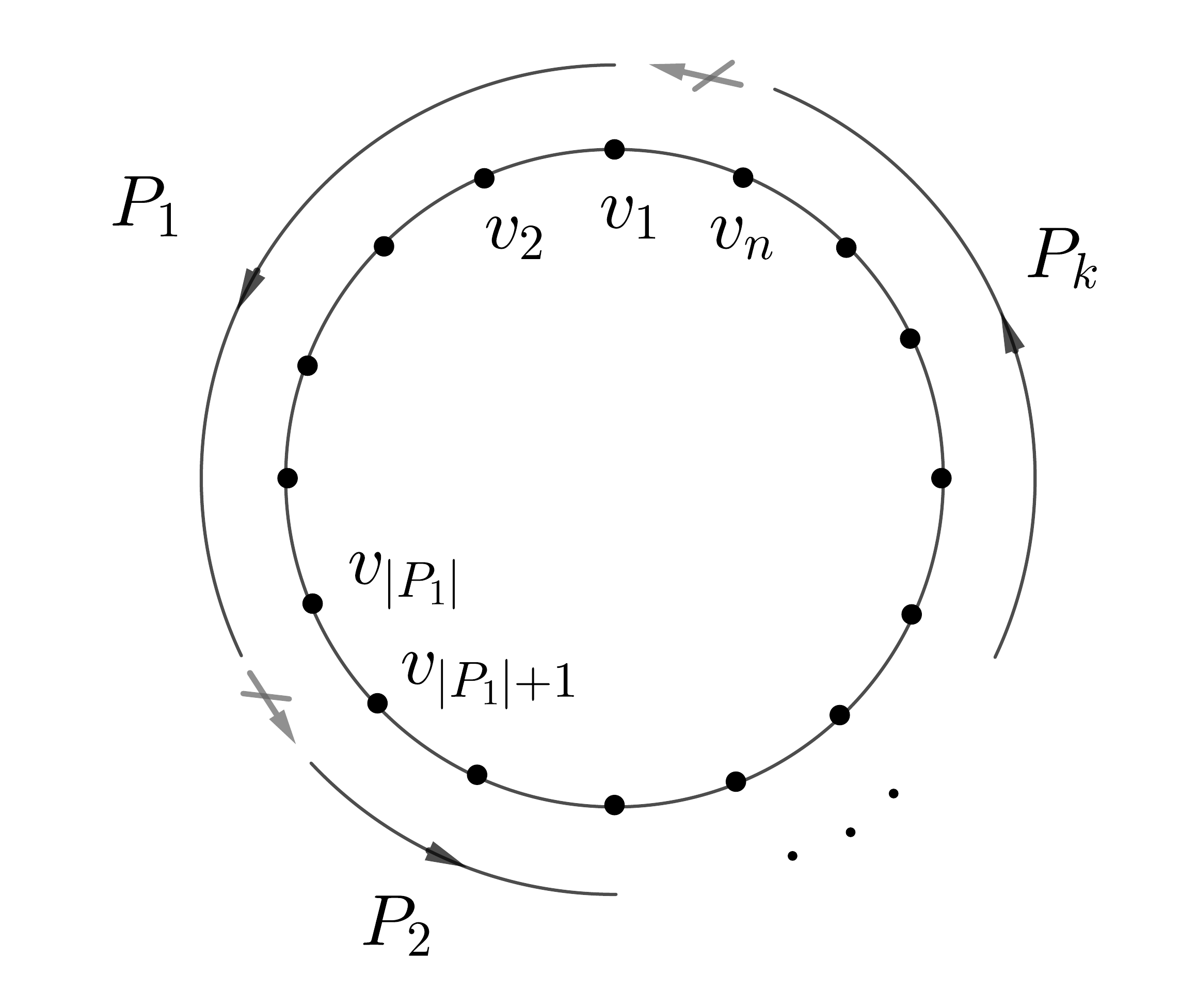}
	\caption{Illustration of $\Phi(P_1,\ldots,P_k)$, an assignment which maps a path partition to the vertices of the seat graph, in order. This assignment is used in line~\ref{alg:cycle:phi} of Algorithm~\ref{alg:cycle}.}\label{fig:ass}
\end{figure}

\begin{observation}\label{obs: cond1} 
	For every assignment $\pi$ on $\mathcal{C}_n$,  if a pair of agents assigned at adjacent nodes $(v_i,v_{i+1})$ satisfy either of the following two conditions:
		\begin{enumerate}[{(P1)}]
			\item $  \pi^{-1}(v_{i+1}) \nrightarrow \pi^{-1}(v_{i-1}) $ 
			\item $ \pi^{-1}(v_i) \nrightarrow \pi^{-1}(v_{i+2})$ 
	\end{enumerate}
then $\pi^{-1}(v_i)$  and $\pi^{-1}(v_{i+1})$ cannot form a blocking pair.
\end{observation}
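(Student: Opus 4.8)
The plan is a direct utility computation, exploiting the fact that on a cycle a swap of two adjacent agents changes only those two agents' neighborhoods. First I would fix notation for the four agents surrounding the swapped edge: writing $a = \pi^{-1}(v_{i-1})$, $b = \pi^{-1}(v_i)$, $c = \pi^{-1}(v_{i+1})$, and $d = \pi^{-1}(v_{i+2})$, with indices taken modulo $n$ along $\mathcal{C}_n$. The key structural observation is that under $\pi$ agent $b$ has neighbors $\{a,c\}$ and agent $c$ has neighbors $\{b,d\}$, whereas under $\pi^{b\leftrightarrow c}$ agent $b$ has neighbors $\{c,d\}$ and agent $c$ has neighbors $\{a,b\}$. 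Each of $b,c$ retains the other as a neighbor, so only the ``outer'' neighbor changes: $b$ trades $a$ for $d$, and $c$ trades $d$ for $a$.

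Next I would convert this into the two envy conditions. Since $u_b(\pi) = \mathbb{I}[b\to a] + \mathbb{I}[b\to c]$ and $u_b(\pi^{b\leftrightarrow c}) = \mathbb{I}[b\to c] + \mathbb{I}[b\to d]$, the difference is $u_b(\pi^{b\leftrightarrow c}) - u_b(\pi) = \mathbb{I}[b\to d] - \mathbb{I}[b\to a]$, which (being a difference of two indicators) is positive exactly when $b\to d$ and $b\nrightarrow a$. Symmetrically, $c$ envies $b$ exactly when $c\to a$ and $c\nrightarrow d$. Therefore $b$ and $c$ form a blocking pair only if all four conditions $b\to d$, $b\nrightarrow a$, $c\to a$, and $c\nrightarrow d$ hold simultaneously.

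Finally I would close by noting that each stated condition negates one of the two required envies. Condition (P1), namely $\pi^{-1}(v_{i+1}) \nrightarrow \pi^{-1}(v_{i-1})$, is precisely $c\nrightarrow a$, which contradicts the requirement $c\to a$ for $c$ to envy $b$; condition (P2), namely $\pi^{-1}(v_i) \nrightarrow \pi^{-1}(v_{i+2})$, is precisely $b\nrightarrow d$, which contradicts the requirement $b\to d$ for $b$ to envy $c$. In either case one of the two agents fails to strictly improve, so no blocking pair can form.

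I expect no substantive obstacle: the argument is a one-line utility difference once the neighborhood change is pinned down correctly, and the whole content is that the retained mutual-neighbor term cancels, leaving a criterion that depends only on the two outer vertices $v_{i-1}$ and $v_{i+2}$. The only point requiring a little care is the index arithmetic modulo $n$; I would remark that the utility expressions above remain valid verbatim even when outer vertices coincide on very short cycles (e.g.\ $a=d$ on a triangle), where the criterion correctly degenerates to ``no envy,'' consistent with a swap not altering any neighborhood.
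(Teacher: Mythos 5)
Your proof is correct and follows essentially the same route as the paper's: a direct comparison of the pre- and post-swap utilities, observing that the mutual-neighbor term is retained and only the ``outer'' neighbor ($v_{i-1}$ versus $v_{i+2}$) changes, so that (P1) kills the envy of $\pi^{-1}(v_{i+1})$ and (P2) kills the envy of $\pi^{-1}(v_i)$. Your version is marginally sharper in that it extracts the exact (necessary and sufficient) envy criterion rather than just the one-sided inequality, and the remark on coinciding outer vertices for short cycles is a nice sanity check, but the substance is identical.
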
 

\begin{observation}\label{obs: cond2} 
	For any assignment $\pi$ on $\mathcal{C}_n$, if $\pi^{-1}(v_{i})$ approves $\pi^{-1}(v_{i-1})$ (resp. $\pi^{-1}(v_{i+1})$ ), then $\pi^{-1}(v_{i})$ does not envy $\pi^{-1}(v_{i+1})$ (resp. $\pi^{-1}(v_{i-1})$).  
\end{observation}

\begin{algorithm}[t]
	\caption{Neighborhood Stable Assignment on Cycles}
	\label{alg:cycle}
	\DontPrintSemicolon
	
	\KwIn{Preference graph $\mathcal{P}$ and seat graph $\mathcal{C}_n$}
	\KwOut{Neighborhood stable assignment  $\pi$}
	
	\BlankLine
	
	\If{$\exists s,t,w \in \mathcal{A}$ with $ s \rightarrow w \leftarrow t, s \nleftarrow t,s \nrightarrow t $}
	{Set $\pi(s)=v_1,\pi(w)=v_2,\pi(t)=v_3$ \;
		\For{$i=4$ to $n$} %\label{alg:cyc:for1}
		{
			Set $W^{i-1} =  \bigcup_{r=1}^{i-1} \pi^{-1}(v_r) $\;
			Let $Q= \{\ell \in \mathcal{A}\setminus W^{i-1} \ | \ \ell \rightarrow \pi^{-1}(v_{i-1}) \}$   \;
			\If{$Q\neq \emptyset$}
			{
				Set $\pi^{-1}(v_{i})  =q$ where $q \in Q$ 
			}
			\Else
			{
				Let $\pi^{-1}(v_{i})$ be any element in $\mathcal{A}\setminus W^{i-1}$ 
			}
		}
		
	}
	\Else
	{
		
		Let $\{P_j\}_{j=1}^k$ be a minimal path partition of $\mathcal{P}$ \;
		
		Set $\pi=\Phi(P_1,...,P_k)$ \label{alg:cycle:phi}
		
		\While{$\pi$ contains an adjacent blocking pair }{
			Update $\pi$ according to Lemma~\ref{lem: reass}
		} 
		
	}
	\Return{$\pi$}
\end{algorithm}

\vspace{1em}
\noindent\emph{Proof of Theorem~\ref{thm: Cycle}.}
We begin our proof focusing on a particular preference structure that admits a simple argument.

{\paragraph*{Case 1: }$\exists s,t,w \in \mathcal{A}$ with $s \rightarrow w \leftarrow t,s \nleftarrow t,s \nrightarrow t $ }

\medskip

We first set $\pi(s) = v_1$, $\pi(w) = v_2$, and $\pi(t) = v_3$, and sequentially define an assignment for $4\leq i \leq n$ as described by the for loop in Algorithm~\ref{alg:cycle}. We now prove that $\pi$ is neighborhood stable.

Observe that $s$ does not envy $\pi^{-1}(v_n)$ and   $t$ does not envy $\pi^{-1}(v_4)$ by Observation~\ref{obs: cond2}. Note that the agent pairs $s,w$ and $w,t$ cannot form  blocking pairs by Observation~\ref{obs: cond1}. Thus, we only need to check whether there is an adjacent blocking pair in $\{\pi^{-1}(v_4), ..., \pi^{-1}(v_n)\}$. Take any agent pair $\pi^{-1}(v_i),\pi^{-1}(v_{i+1})$ and observe that  one of the following two must hold:
\begin{itemize}
		\item $ \pi^{-1}(v_i) \rightarrow \pi^{-1}(v_{i-1}) $
		\item $ \pi^{-1}(v_{i+1}) \nrightarrow \pi^{-1}(v_{i-1}) $
\end{itemize}

To see this, note that if $\pi^{-1}(v_i) \nrightarrow \pi^{-1}(v_{i-1})$, then it must hold that  $ \pi^{-1}(v_{i+1}) \nrightarrow \pi^{-1}(v_{i-1}) $. This is because $ \pi^{-1}(v_{i+1}) \in \mathcal A\setminus W^{i-1}$ and if $  \pi^{-1}(v_{i+1}) \rightarrow \pi^{-1}(v_{i-1})$, then the agent $\pi^{-1}(v_{i+1})$ would have been selected in place of $\pi^{-1}(v_i)$.

In either  case ($\pi^{-1}(v_i) \rightarrow \pi^{-1}(v_{i-1}) $ or $  \pi^{-1}(v_{i+1}) \nrightarrow \pi^{-1}(v_{i-1})$), the agent pair $\pi^{-1}(v_i),\pi^{-1}(v_{i+1})$ cannot form a blocking pair by Observation~\ref{obs: cond2} and Observation~\ref{obs: cond1}, respectively. Therefore, $\pi$ is neighborhood stable. 

{\paragraph*{Case 2}: For each $w\in\mathcal{A}$, and for any $s,t\in N^- ({w,\mathcal A})$ either $s\rightarrow t$ or $t \rightarrow s$ or both } 

\medskip

We start by constructing a path partition as follows: compute a maximal path $P_1$ in $\mathcal{P}$, remove the vertices of $P_1$, and then compute maximal paths in the subgraph of $\mathcal{P}$ induced by the remaining vertices. This procedure is iterated until no vertices are left, at which point we are left with the path partition $\{P_j\}_{j=1}^k$. Note that for this assignment $\mathsf{tail}({P_i}) \nrightarrow \mathsf{head}({P_j}) $ for any $i,j\in [k]$ as otherwise this would contradict maximality of $P_i$ or $P_j$ in a given iteration. Hence $\{P_j\}_{j=1}^k$ is a minimal path partition.

\begin{claimN}\label{lem:pathext} For instances satisfying Case 2, given a directed path $P=s_1 \rightarrow s_2 \rightarrow \cdots \rightarrow s_r$, if there exists an agent $s\not\in P$  with $s\rightarrow s_r$ then either $s\rightarrow s_1 $  or there exists an index $j\in [r-1]$ such that $s_j\rightarrow s \rightarrow s_{j+1}$ .
\end{claimN}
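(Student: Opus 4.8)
The plan is to exploit the Case 2 hypothesis directly, which states that for every agent $w$ and any two agents $s,t \in N^-(w,\mathcal{A})$ (the in-neighborhood of $w$ in $\mathcal{P}$, i.e.\ the agents approving $w$), there is an arc between $s$ and $t$ in at least one direction. I would prove the claim by walking backward along the path $P$, maintaining the invariant that $s \rightarrow s_k$ for a decreasing index $k$, starting from $k = r$, which holds by the hypothesis $s \rightarrow s_r$.

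The inductive step is the heart of the argument. Suppose $s \rightarrow s_k$ for some $k \geq 2$. Since $P$ is a directed path, $s_{k-1} \rightarrow s_k$, so both $s$ and $s_{k-1}$ lie in $N^-(s_k,\mathcal{A})$; moreover $s \neq s_{k-1}$ because $s \notin P$. Applying the Case 2 hypothesis to $w = s_k$ then forces $s \rightarrow s_{k-1}$ or $s_{k-1} \rightarrow s$. If $s_{k-1} \rightarrow s$, then combining this with the invariant $s \rightarrow s_k$ yields $s_{k-1} \rightarrow s \rightarrow s_k$, so the index $j = k-1$ witnesses the second alternative of the claim and we are done. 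Otherwise $s \rightarrow s_{k-1}$, which restores the invariant with index $k-1$, and I continue.

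Since $k$ strictly decreases at each continuation, the process must terminate, and the only way to avoid producing an insertion index $j$ is to keep deriving $s \rightarrow s_{k-1}$ until the invariant reaches $k = 2$. At that point, applying the Case 2 hypothesis to $w = s_2$ with in-neighbors $s_1$ and $s$ gives either $s \rightarrow s_1$, which is precisely the first alternative of the claim, or $s_1 \rightarrow s$, which yields $s_1 \rightarrow s \rightarrow s_2$ and hence $j = 1$. Either way the claim holds, so the backward walk always terminates in one of the two desired conclusions.

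This is essentially a clean finite induction, so I do not anticipate a serious obstacle. The only points requiring care are verifying that the two agents fed into the Case 2 hypothesis at each step are genuinely distinct, which is guaranteed by $s \notin P$, and tracking the index bounds so that $s_{k-1}$ always exists; this is exactly why the terminal index $k = 2$ must be treated as the base case, since it is there that the first alternative $s \rightarrow s_1$ can arise.
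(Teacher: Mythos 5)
Your argument is correct and is essentially the paper's proof: the paper phrases it as induction on the path length (peeling off the head vertex and applying the Case 2 hypothesis at $w=s_2$), which when unrolled is exactly your backward walk applying the hypothesis at $w=s_k$ with in-neighbors $s$ and $s_{k-1}$. The only cosmetic difference is that you should note the trivial case $r=1$, where $s\rightarrow s_1$ holds directly from the hypothesis.
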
 
\begin{proof}[Proof of Claim] First, note that if $|P|=1$, we can trivially extend $P$ since $s\rightarrow s_1$ by assumption. We proceed by induction on the size of the directed path. Suppose the statement is true for $|P|= r-1$. Now, we prove that the statement holds for all directed paths of length $r$. Let $P$ be $s_1 \rightarrow s_2 \rightarrow \cdots \rightarrow s_r$, and consider the subpath $P'=s_2 \rightarrow \cdots \rightarrow s_r$ of length $r-1$.
	
	By the induction hypothesis, either $s\rightarrow s_2$, or there is an index $2\leq j\leq r-1$ such that $s_j \rightarrow s \rightarrow s_{j+1} $. In the latter case, the statement immediately holds.
	
	Consider the former scenario, $s\rightarrow s_2 $. Since $s_1\rightarrow s_2$, we have that $s_{1},s\in N^{-}(s_2,\mathcal {A})$. By the case distinction, there must be an arc $s_{1}\rightarrow s$ or $s \rightarrow s_{1}$. In the case of $s_{1}\rightarrow s$, we observe that $s_{1}\rightarrow s\rightarrow s_2 $. Similarly, if $s \rightarrow s_{1}$, then $s\rightarrow s_1 \rightarrow s_2$. Consequently, the induction step is satisfied.
\end{proof}

We now present the main technical lemma, which states that under Case 2, if there is an adjacent blocking pair, then there exists another assignment with a strictly higher number of agents who approve their right neighbor. 

\begin{lemma}\label{lem: reass}
	Let $\pi$ be an assignment $\Phi(P_1,\ldots,P_k)$ corresponding to a minimal path partition $\{P_j\}_{j=1}^k$. If there is an adjacent blocking pair for $\pi$, then there is another minimal path partition $\{P'_j\}_{j=1}^{k'}$ whose corresponding  assignment $\rho=\Phi(P'_1,\ldots,P'_{k'})$ satisfies 
	\begin{align*}
			|\{v_j\in V \ | \ \rho^{-1}(v_j)\rightarrow \rho^{-1}(v_{j+1}) \ \} | \geq  |\{v_j\in V \ | \ \pi^{-1}(v_j)\rightarrow \pi^{-1}(v_{j+1}) \ \} | +1.
	\end{align*}

\end{lemma}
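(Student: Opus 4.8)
The plan is to recast the conclusion as a purely combinatorial statement about the \emph{number} of paths. First I would record the key identity: for any minimal path partition $\{P_j\}_{j=1}^{k}$, the induced assignment $\sigma=\Phi(P_1,\ldots,P_k)$ satisfies
$\Psi(\sigma):=|\{\,j : \sigma^{-1}(v_j)\rightarrow\sigma^{-1}(v_{j+1})\,\}| = n-k$ (indices mod $n$). Indeed, the $n-k$ edges internal to the paths are all arcs, while the remaining $k$ boundary edges (the $k-1$ junctions between consecutive paths, plus the wrap-around edge from $\mathsf{tail}(P_k)$ to $\mathsf{head}(P_1)$) are all non-arcs by minimality. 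Thus $\Psi(\pi)=n-k$, and it suffices to produce a minimal path partition with at most $k-1$ paths: re-minimization never increases the path count, so its assignment $\rho$ will satisfy $\Psi(\rho)=n-k'\ge\Psi(\pi)+1$.

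Next I would extract the local structure forced by an adjacent blocking pair. Suppose $a=\pi^{-1}(v_i)$ and $b=\pi^{-1}(v_{i+1})$ block, and write $c=\pi^{-1}(v_{i-1})$, $d=\pi^{-1}(v_{i+2})$. By the contrapositives of Observation~\ref{obs: cond1} and Observation~\ref{obs: cond2}, blocking forces $a\rightarrow d$, $b\rightarrow c$, $a\nrightarrow c$, and $b\nrightarrow d$. Since $b\nrightarrow d$, minimality prevents $b$ and $d$ from sharing a path, so $b=\mathsf{tail}(P_m)$ and $d=\mathsf{head}(P_{m+1})$ for some $m$; whether $c\rightarrow a$ and $a\rightarrow b$ hold then pins down exactly where the remaining boundaries near $a$ fall.

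The heart of the argument is a reconnection consuming the two ``long'' arcs $a\rightarrow d$ and $b\rightarrow c$. Because $a\rightarrow d=\mathsf{head}(P_{m+1})$, I splice the portion of $a$'s path up to and including $a$ onto $P_{m+1}$. When $a$ is itself a tail (the case $a\nrightarrow b$), this is a direct concatenation, and $b$ — which is then forced to be its own singleton path — simply remains, already lowering the count. When instead $a$ is the in-path predecessor of $b$ (the case $a\rightarrow b$), the splice detaches $b$; I reinsert it using $b\rightarrow c$. Here $c$ is the tail of the path that must absorb $b$ — either the prefix (ending at $c$) of the freshly spliced path, or the path $P_{m-1}$, according to whether $c\rightarrow a$ — so Claim~\ref{lem:pathext}, whose Case~2 hypothesis is in force, lets me prepend or insert $b$ there. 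In every configuration the agents of the affected paths are repacked into strictly fewer paths (by one generically, by more when extra singletons appear), and a final re-minimization yields a minimal $\{P'_j\}_{j=1}^{k'}$ with $k'\le k-1$.

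The main obstacle I anticipate is the bookkeeping in this last step: checking, in each of the boundary configurations around $a$, that the spliced sequences are genuine directed paths (every consecutive pair an arc, no repeated vertices) and that the hypotheses of Claim~\ref{lem:pathext} hold for the reinsertion of $b$ (namely $b\notin$ the target path and $b\rightarrow$ its tail $c$), so that the net effect is always a decrease of at least one in the path count. Once this case analysis is clean, the identity $\Psi=n-k$ converts it directly into the asserted $+1$ increase in $\Psi$.
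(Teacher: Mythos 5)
Your proposal is correct and follows essentially the same route as the paper's proof: derive the four forced arc conditions $a\rightarrow d$, $b\rightarrow c$, $a\nrightarrow c$, $b\nrightarrow d$ from Observations~\ref{obs: cond1} and~\ref{obs: cond2}, reinsert $b$ via Claim~\ref{lem:pathext}, splice $a$'s path onto the path headed by $d$, re-minimize, and convert the drop in path count into the $+1$ gain via $\Psi = n-k$. The only cosmetic difference is that the paper disposes of your $a\nrightarrow b$ configuration by showing it contradicts minimality of the original partition (its ``Type II''), whereas you handle it constructively; both resolutions are valid.
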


\begin{proof}
	Suppose $\pi^{-1}(v_i), \pi^{-1}(v_{i+1})$ form a blocking pair. First, observe that adjacent blocking pairs come in only two possible types: either they belong to the same directed path in the path partition, or $\pi^{-1}(v_i) \in P_{r}$ and $\pi^{-1}(v_{i+1}) \in P_{r+1}$ for some $r \in [k]$.\footnote{For notational convenience, we interpret $P_{k+1}$ as $P_{1}$.} 
	In what follows, we treat these cases separately.
	{\paragraph*{Type I}: $ \pi^{-1}(v_i), \pi^{-1}(v_{i+1})\in P_{r} $ for some $r\in [k]$  }
	
	\medskip
	
	By Observation~\ref{obs: cond2}, it must be the case that $\pi^{-1}(v_{i+1}) \nrightarrow \pi^{-1}(v_{i+2})$ since $\pi^{-1}(v_i), \pi^{-1}(v_{i+1})$ form a blocking pair. It follows that, $\pi^{-1}(v_{i+1})=\mathsf{tail}(P_r)$ and $\pi^{-1}(v_{i+2})=\mathsf{head}(P_{r+1})$. 
	Since $\pi^{-1}(v_i), \pi^{-1}(v_{i+1})$ form a blocking pair, we know that $\pi^{-1}(v_{i})\rightarrow \pi^{-1}(v_{i+2}) $ and $\pi^{-1}(v_{i+1})\rightarrow \pi^{-1}(v_{i-1})$ by Observation~\ref{obs: cond1}.
	
	Consider the case when $|P_r|=2$ which means $\pi^{-1}(v_i)=\mathsf{head}(P_{r})$  and $\pi^{-1}(v_{i-1})=\mathsf{tail}(P_{r-1})$.  
	Recalling that $\pi^{-1}(v_{i+1})\rightarrow \pi^{-1}(v_{i-1})$ and applying Claim~\ref{lem:pathext}, we see that $\pi^{-1}(v_{i+1})$ can be inserted into $P_{r-1}$ to form a new path $P'_{r-1}$. 
	Also, since $\pi^{-1}(v_{i})\rightarrow \pi^{-1}(v_{i+2}) = \mathsf{head}(P_{r+1})$, we can construct a new directed path where $P'_{r+1}=\pi^{-1}(v_{i}) \rightarrow P_{r+1}$. 
	Consider a new path partition $P'=\{P_1,\ldots,P_{r-2},P'_{r-1},P'_{r+1},P_{r+2},\ldots, P_k\}$. 
	We can make this path partition minimal by connecting tails of the new directed paths to the heads of the new directed path whenever possible. 
	Consider now an assignment $\rho=\Phi(P'_1,...,P'_{k'})$ corresponding to the new minimal path partition $\{P'_j\}_{j=1}^{k'}$. 
	Noting that $k'\leq k-1$, we see that

		\begin{align*}
			|\{v_j\in V \ | \ \rho^{-1}(v_j)\rightarrow \rho^{-1}(v_{j+1}) \ \} | &\geq n - k' \\ 
			& \geq n-(k-1) \\
			&= |\{v_j\in V \ | \ \pi^{-1}(v_j)\rightarrow \pi^{-1}(v_{j+1}) \ \} | +1.
		\end{align*}

	Now consider the case where $|P_r|\geq 3$. 
	Recall that $\pi^{-1}(v_{i+1})=\mathsf{tail}(P_r)$ and $\pi^{-1}(v_{i+2})=\mathsf{head}(P_{r+1})$.
	Also, because $|P_r|\geq 3$, we have that $\pi^{-1}(v_{i-1})\in P_r$. 
	Since $\pi^{-1}(v_{i+1})\rightarrow \pi^{-1}(v_{i-1})$, it follows from Claim~\ref{lem:pathext} that the directed path $P_r\setminus \{ \pi^{-1}(v_i), \pi^{-1}(v_{i+1}) \}$ can be extended by inserting $\pi^{-1}(v_{i+1})$. 
	This gives us a new path $\tilde{P}$ with $\mathsf{tail}(\tilde{P})=\pi^{-1}(v_{i-1})$ and with $\pi^{-1}(v_{i+1})\in \tilde{P}$. 
	
	Now note that $\pi^{-1}(v_{i})\rightarrow \pi^{-1}(v_{i+2})=\mathsf{head}(P_{r+1})$. 
	Hence, we can combine directed paths $P_r$ and $P_{r+1}$ to a single directed path $P'_{r+1}=\tilde{P}\rightarrow \pi^{-1}(v_{i}) \rightarrow P_{r+1} $.
	This process of combining two paths is illustrated in Figure~\ref{fig:merge}.
	Observe that $P'_{r+1}$ is indeed a directed path since we have $\mathsf{tail}(\tilde{P})=\pi^{-1}(v_{i-1})\rightarrow \pi^{-1}(v_{i})  \rightarrow \pi^{-1}(v_{i+2})=\mathsf{head}(P_{r+1}) $. 
	Finally, consider a new path partition $P'=\{P_1,\ldots, P_{r-1}, P'_{r+1},\ldots,P_k\}.$ %$P'_{i}=P_{i}$ for all $i\neq r,  r+1 $ and $P'_a$. 
	We make this path partition minimal by connecting tails of the new directed paths to the heads of the new directed paths whenever possible. 
	Consider now an assignment $\rho=\Phi(P'_1,...,P'_{k'})$ corresponding to the new minimal path partition $\{P'_j\}_{j=1}^{k'}$.
	Noting that $k'\leq k-1$, we can show by the same argument that the inequality holds.

	\begin{figure}
		\centering
		\includegraphics[scale=0.2]{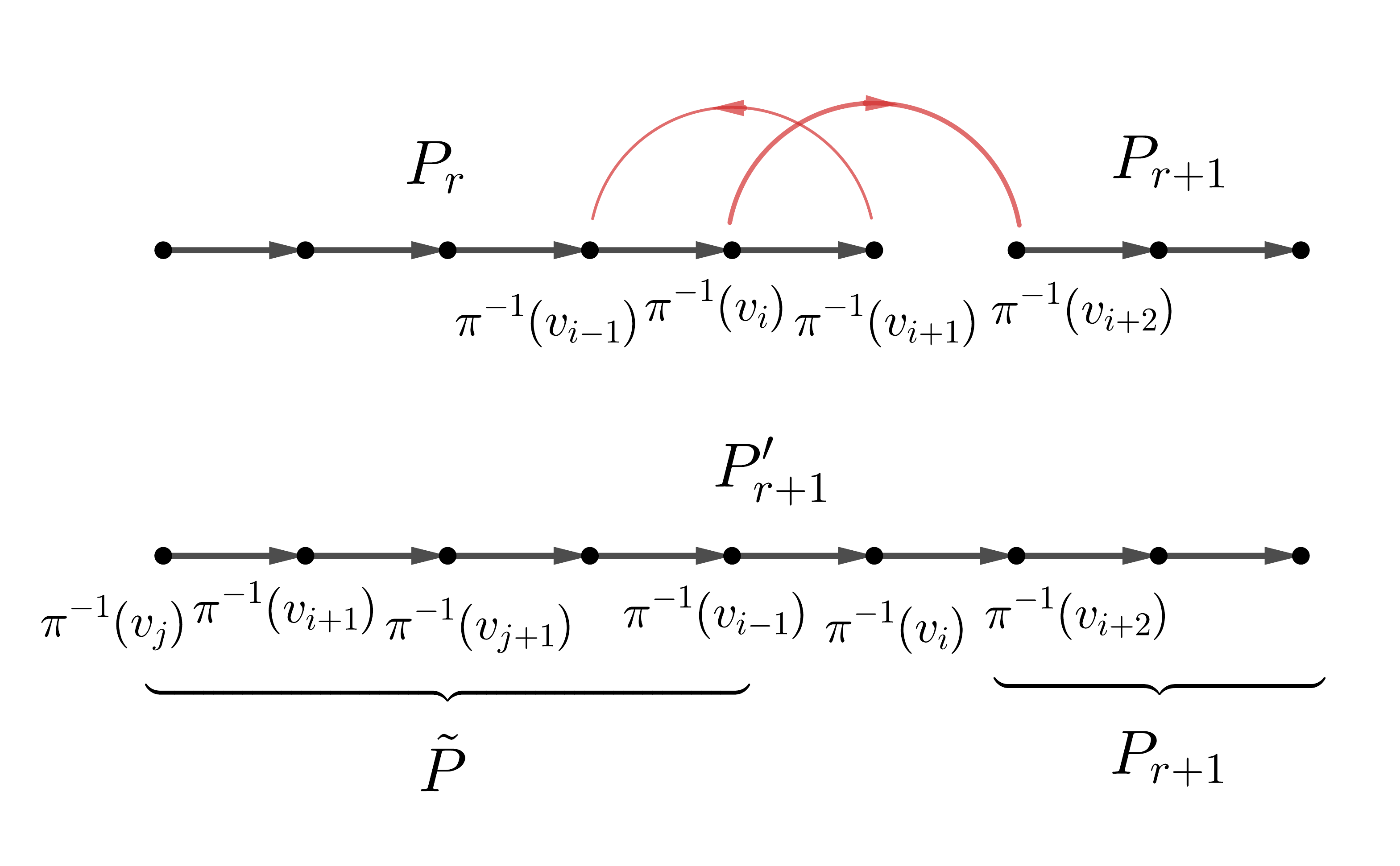}
		\caption{Demonstration of how directed paths $P_r$ and $P_{r+1}$ are combined to create a new path $P'_{r+1}$ when $|P_r| \geq 3$ and $\pi^{-1}(v_i),\pi^{-1}(v_{i+1})$ form a blocking pair of Type I.}\label{fig:merge}
	\end{figure}

	{\paragraph*{Type II:}   $\pi^{-1}  (v_i)\in P_{r}, {\pi}^{-1}(v_{i+1})\in P_{r+1}$ for some $r\in [k]$  }
	
	\medskip
	
	Note that Type II implies $\pi^{-1}(v_i) = \mathsf{tail}(P_r)$ and $\pi^{-1}(v_{i+1}) = \mathsf{head}(P_{r+1})$.
	Observe that for Type II to occur, it must be the case that $|P_{r+1}|=1$. To see this, suppose, on the contrary, that $|P_{r+1}|\geq 2$. It follows that $\pi^{-1}(v_{i+1}) \rightarrow \pi^{-1}(v_{i+2})$ and thus, by Observation~\ref{obs: cond2}, $\pi^{-1}(v_{i+1})$ does not envy $\pi^{-1}(v_i)$.
	This contradicts our initial assumption that $\pi^{-1}(v_i), \pi^{-1}(v_{i+1})$ form a blocking pair.
	
	Since we have that $|P_{r+1}|=1$, it follows that $\pi^{-1}(v_{i+2})=\mathsf{head}(P_{r+2})$. 
	But since $\pi^{-1}(v_{i})=\mathsf{tail}(P_r)$ and $\pi^{-1}(v_{i}) \rightarrow \pi^{-1}(v_{i+2})$, we see that $\mathsf{tail}(P_r)\rightarrow \mathsf{head}(P_{r+2})$. This contradicts the minimality of the path partition. %\qed
	% Hence no blocking pair with $\pi^{-1}(v_i)\in P_{r}, \pi^{-1}(v_{i+1})\in P_{r+1}$ is possible.
\end{proof}

If we repeatedly apply Lemma~\ref{lem: reass} to the assignment $\pi$ corresponding to the initial minimal path partition $\{P_j\}_{j=1}^k$, we are guaranteed to terminate with a neighborhood stable assignment. 
To see this, observe that for any assignment $\rho$, we have $|\{v_j\in V \ | \ \rho^{-1}(v_j)\rightarrow \rho^{-1}(v_{j+1}) \ \} |\leq n$, and thus Lemma~\ref{lem: reass} is applied at most $n$ times.
The assignment resulting from the last application of Lemma~\ref{lem: reass} cannot have an adjacent blocking pair (otherwise Lemma~\ref{lem: reass} could be applied again) and is thus neighborhood stable.

\vspace{1em}
\noindent\textbf{Analysis of running time}\\ 
Lastly, we analyze the running time of Algorithm~\ref{alg:cycle}. 
Note that the first step, checking which of the two cases the preference graph $\mathcal P$ belongs to, can be done in $O(n^3)$ time by examining every triplet. 
In Case 1, the assignment can be computed in at most $O(n^2)$ time.
In Case 2, the initial minimal path partition can be found in $O(n^2)$ time and finding assignment $\rho$ in Lemma~\ref{lem: reass} takes $O(n^2)$ time. Since Lemma~\ref{lem: reass} is applied at most $n$ times, the total running time is $O(n^3)$.
\qed

\subsection{Path Seat Graphs}
Although computation of a neighborhood stable assignment on a path can be achieved by a straightforward reduction from our algorithm for the cycle case (Algorithm~\ref{alg:cycle}), we will give a polynomial-time algorithm with stronger guarantees.
Specifically, our algorithm outputs an assignment with no blocking pairs at distance less than or equal to two from each other.
This answers a question left by \citet{berriaud2023stable}, who gave a potential function argument for the same result on paths, but left open whether such an assignment could be computed with an explicit, polynomial-time algorithm.

\begin{theorem}
	An assignment in which no blocking pair are assigned at most distance two from each other always exists and can be computed in polynomial time $O(n^3)$ when the seat graph is a path.
\end{theorem}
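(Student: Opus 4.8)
The plan is to mirror the proof of Theorem~\ref{thm: Cycle}, after first recording a single observation that lets the distance-one and distance-two cases be treated uniformly. Write an assignment as the sequence $a_1,\dots,a_n$ with $a_j=\pi^{-1}(v_j)$, and consider a candidate blocking pair at positions $p<q$ with $q-p\le 2$. Put $a=a_p$, $b=a_q$, and let $x=a_{p-1}$, $y=a_{q+1}$ be the agents flanking the pair (undefined when $p-1<1$ or $q+1>n$). Any seat contribution common to both agents --- the mutual adjacency of $a$ and $b$ when $q-p=1$, or the shared neighbour $v_{p+1}$ when $q-p=2$ --- is identical before and after the swap and cancels, so exactly as in Observations~\ref{obs: cond1} and~\ref{obs: cond2} the pair blocks \emph{if and only if} $a\rightarrow y$, $a\nrightarrow x$, $b\rightarrow x$, and $b\nrightarrow y$. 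Two consequences drive everything: if a flank is undefined the corresponding agent gains nothing by moving, so no blocking pair straddles an endpoint of the path; and if every agent approves its right neighbour (the sequence is a directed Hamiltonian path of $\mathcal{P}$) then $b\rightarrow y$ always holds and no distance-$\le 2$ blocking pair exists. This singles out $\Phi(P_1,\dots,P_k)$ as the right object, with potential $|\{j:a_j\rightarrow a_{j+1}\}|=n-k$, since under a \emph{minimal} partition $a_j\nrightarrow a_{j+1}$ occurs precisely at the $k$ path tails.

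Following Algorithm~\ref{alg:cycle}, I would split on the same preference structure. In Case~2 (for every $w$, any two in-neighbours of $w$ are comparable), Claim~\ref{lem:pathext} applies verbatim, and I would generalise Lemma~\ref{lem: reass} to: if $\Phi$ of a minimal partition admits \emph{any} distance-$\le 2$ blocking pair, then some minimal partition has strictly larger potential (one fewer path). The distance-one instance is identical to the cycle lemma --- by the unified condition $b$ fails to approve its right neighbour, hence is a tail, and the pair is either internal to a path (Type~I) or straddles two paths (Type~II), resolved as before by inserting $b$ into the preceding path via Claim~\ref{lem:pathext} and prepending $a$ to the following path, or by contradicting minimality. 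For the new distance-two instance ($q=p+2$) the blocking condition has the \emph{same} algebraic form, now with flanks $x=a_{p-1}$ and $y=a_{p+3}$; again $b=a_{p+2}$ is a tail with $y=\mathsf{head}$ of the next path, $b\rightarrow x$, and $a\rightarrow y$, so the identical merge applies after a short case analysis on $|P_s|$ and on whether the middle agent $a_{p+1}$ lies in $P_s$. Re-minimising only decreases $k$ further, and as the potential is bounded by $n-1$ the process halts, necessarily at an assignment free of distance-$\le 2$ blocking pairs.

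Case~1 ($\exists\, s\rightarrow w\leftarrow t$ with $s\nrightarrow t$ and $t\nrightarrow s$) I would handle by the left-to-right greedy of Algorithm~\ref{alg:cycle}: seed $s,w,t$ at $v_1,v_2,v_3$ and, for $i\ge 4$, choose $a_i$ approving $a_{i-1}$ whenever some unused agent does. By the unified condition, a blocking pair at $p<q$ requires $a_p\nrightarrow a_{p-1}$, so for $p\ge 4$ the set $Q$ was empty when $a_{p-1}$ was placed, whence every later-placed agent --- in particular $b=a_q$ --- fails to approve $a_{p-1}=x$, contradicting $b\rightarrow x$. The finitely many pairs meeting the seeded triple or an endpoint are checked directly from $s\rightarrow w\leftarrow t$ and the fact that an agent at $v_1$ or $v_n$ has a single neighbour. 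The running time is $O(n^3)$, as in Theorem~\ref{thm: Cycle}: the $O(n^3)$ case test, an $O(n^2)$ initial partition or greedy, and at most $n-1$ merges of cost $O(n^2)$ each.

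The main obstacle I anticipate is the distance-two merge in Case~2: unlike in Lemma~\ref{lem: reass}, the swapped agents are not adjacent, so one must track the middle agent $a_{p+1}$ and verify that detaching the tail $b$ and the agent $a$ and reattaching them via Claim~\ref{lem:pathext} still yields a valid path partition of strictly larger potential across all sub-cases of $|P_s|$ and of the position of $a_{p+1}$. A secondary but real obstacle is the boundary bookkeeping --- confirming that pairs involving $v_1,v_2,v_{n-1},v_n$ or the seeded triple cannot block --- where the degree-one endpoints of the path are precisely what rescue the argument.
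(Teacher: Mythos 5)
Your plan has a genuine gap at its core. The generalisation of Lemma~\ref{lem: reass} to distance-two blocking pairs is exactly the step you defer to ``a short case analysis,'' and it does not go through as described. Take the sub-case $P_r = a \rightarrow m \rightarrow b$ with $|P_r|=3$, where $a=a_p=\mathsf{head}(P_r)$, $m=a_{p+1}$, $b=a_{p+2}=\mathsf{tail}(P_r)$; the blocking conditions give $b\rightarrow \mathsf{tail}(P_{r-1})$ and $a\rightarrow\mathsf{head}(P_{r+1})$. Claim~\ref{lem:pathext} lets you insert $b$ into $P_{r-1}$, and you can prepend $a$ to $P_{r+1}$, but the middle agent $m$ is stranded: the only arcs you control at $m$ are $a\rightarrow m$ and $m\rightarrow b$, and both $a$ and $b$ have been reattached elsewhere, so $m$ becomes a singleton path. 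The new partition still has $k$ parts, the potential $n-k'$ does not strictly increase, and the termination argument collapses (re-minimisation is not guaranteed to help). A secondary problem sits in your Case~1: the seeded triple destroys the invariant that makes the greedy work at distance two. For the pair $(a_2,a_4)=(w,a_4)$, nothing in the construction forbids $w\rightarrow a_5$, $w\nrightarrow s$, $a_4\rightarrow s$, $a_4\nrightarrow a_5$ holding simultaneously, which is precisely your unified blocking condition; the hypotheses $s\rightarrow w\leftarrow t$ with $s\nrightarrow t$ and $t\nrightarrow s$ say nothing about who approves $s$, so the boundary check you wave at may actually fail.

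The paper avoids all of this: on a path no case split and no path partitions are needed. It places an arbitrary agent at $v_1$ and greedily fills $v_2,\dots,v_n$, at each step choosing, if possible, an unplaced agent that approves the agent just placed. The analysis is then short: if $a_i\rightarrow a_{i-1}$, any profitable swap of $a_i$ to $v_{i+1}$ or $v_{i+2}$ keeps $a_{i+1}$ as a neighbour and must reach utility two, forcing $a_i\rightarrow a_{i+1}$ and hence utility already two under $\pi$; if $a_i\nrightarrow a_{i-1}$, the greedy set $Q$ was empty at step $i$, so neither $a_{i+1}$ nor $a_{i+2}$ approves $a_{i-1}$, which is the only new neighbour either would gain by swapping with $a_i$. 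Your unified blocking condition is correct and your instinct that the path's endpoints rescue the boundary cases is sound, but the cycle machinery you import is both unnecessary here and, as it stands, broken at the key step.
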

\begin{proof}
	We first set $\pi(v_1)$ as an arbitrary agent $s\in \mathcal{A}$ and sequentially define an assignment for $k\geq 2$ as follows. 
	First, set  $W^{k-1} =  \bigcup_{i=1}^{k-1} \pi^{-1}(v_i). $ Then, let $Q=\{\ell \in \mathcal{A} \setminus W^{k-1} \ | \ \ell \rightarrow \pi^{-1}(v_{k-1})\}$. If $Q \neq \emptyset$, we assign an arbitrary agent from $Q$ to vertex $v_{k}$. On the other hand, if $Q = \emptyset$, we assign any agent from the set  $\mathcal{A} \setminus W^{k-1}$ to vertex $v_{k}$. Let $\pi$ be the resulting assignment.
	
	We will now rule out blocking pairs of the form $\pi^{-1}(v_{i}),\pi^{-1}(v_{i+1})$ and $\pi^{-1}(v_{i}),\pi^{-1}(v_{i+2})$ for all $i\in[n-2]$. When $i=1$, it is clear that neither  $\pi^{-1}(v_{2})$  nor $\pi^{-1}(v_{3})$ envies $\pi^{-1}(v_{1})$ since their neighborhood under a swap is a subset of their neighborhood under $\pi$. 
	
	Otherwise, for any $ i\geq 2 $, there are two cases. First, suppose $ \pi^{-1}(v_{i}) \rightarrow \pi^{-1}(v_{i-1})$. Assume for a contradiction that either $\pi^{-1}(v_{i}),\pi^{-1}(v_{i+1})$ or $\pi^{-1}(v_{i}),\pi^{-1}(v_{i+2})$ forms a blocking pair. This means that  $\pi^{-1}(v_{i})$ gets utility of two by swapping with either $\pi^{-1}(v_{i+1})$ or $\pi^{-1}(v_{i+2})$. Since $\pi^{-1}(v_{i+1})$ is a neighbor under either such swap, it must be the case that $\pi^{-1}(v_{i})$ approves $\pi^{-1}(v_{i+1})$, and thus already has utility of two under $\pi$. This is a contradiction since the maximum achievable utility is two.
	
	Now suppose $ \pi^{-1}(v_{i}) \nrightarrow \pi^{-1}(v_{i-1})$. By the construction of the assignment, neither $\pi^{-1}(v_{i+1})$ nor $\pi^{-1}(v_{i+2})$ approves $\pi^{-1}(v_{i-1})$. We can conclude that neither   $\pi^{-1}(v_{i+1})$ nor $\pi^{-1}(v_{i+2})$ envies $\pi^{-1}(v_{i})$. 
	This holds because they do not approve their only new neighbor (i.e., $\pi^{-1}(v_{i-1})$) under a swap with  $\pi^{-1}(v_{i})$.
\end{proof}

\section{A Sufficient Condition for General Graphs }

In this section, we give a quite general, sufficient condition under which neighborhood stable assignments are guaranteed to exist.
Our condition constrains the relation between a measure of acyclicity on the preference graph and the number of \emph{leaves} (i.e., degree-one vertices) in the seat graph.
The directed feedback vertex set number, a well-studied directed graph property which we define here in the context of our problem, gives a measure of how close a directed graph is to being acyclic.

\begin{definition}
	The set $X\subseteq \mathcal{A}$ is a directed feedback vertex set (DFVS) if the subgraph of $\mathcal{P}$ induced by $\mathcal{A}\setminus X$ is acyclic. 
	The  DFVS number is the size of the smallest such set. 
\end{definition}

In what follows, we will prove the existence of a neighborhood stable assignment for every instance in which the DFVS number is upper bounded by the number of leaves in the seat graph.
The proof yields a polynomial-time algorithm for a stable assignment in the case that the DFVS number is zero, and $\mathcal{P}$ is thus a DAG. 
The proof is similar to that of Theorem 4.2 from \citet{bullinger2024topological}, however our statement captures a broader class of instances and concerns a different form of stability.

\begin{theorem} \label{thm:sufficient}
	For any instance in which the number of leaves of the seat graph $G$ exceeds the DFVS number $\gamma$ of the preference graph $\mathcal{P}$, there exists a neighborhood stable assignment on $G$.
	When $\gamma=0$, i.e., $\mathcal{P}$ is a DAG, a stable assignment can be computed in polynomial time.
\end{theorem}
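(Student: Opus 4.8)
The plan is to reduce everything to the acyclic case and to handle that case by repeatedly stripping a leaf of the seat graph together with a \emph{source} of the preference graph. The workhorse is an observation I will call the free-source lemma: if $s$ has in-degree zero in $\mathcal{P}$ (nobody approves $s$) and $v$ is a leaf of $G$ with unique neighbor $w$, then for any assignment $\pi'$ of $\mathcal{A}\setminus\{s\}$ on $G-v$, the extension obtained by placing $s$ at $v$ is neighborhood stable on $G$ if and only if $\pi'$ is neighborhood stable on $G-v$. Two facts drive this. First, since no agent approves $s$, seating $s$ at $v$ contributes nothing to any other agent's utility, so every envy relation among $\mathcal{A}\setminus\{s\}$ is preserved; the only agent whose neighborhood changes is $w$, and because no neighbor of $w$ approves $s$ either, no agent can gain by relocating onto $v$, so no new blocking pair through $w$ is created. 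Second, the only agent adjacent to $s$ is $w$, and $w$ cannot envy $s$, since swapping onto the leaf would leave $w$ with utility $\mathbb{I}[w\rightarrow s]=0$. Hence no adjacent blocking pair involves $s$, and the reduction is exact.

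For the acyclic case $\gamma=0$ I would apply the free-source lemma iteratively: a DAG always has a source, so while a leaf is available I remove a source, seat it at a leaf, and recurse; an isolated vertex of $\mathcal{P}$ (in- and out-degree zero) may instead be dropped on any remaining seat, as it neither envies nor is approved. Leaves regenerate under this stripping, since deleting a pendant vertex promotes a degree-two neighbor to a leaf, so the process runs to completion on tree-like seat graphs and otherwise proceeds until the leafless $2$-core is reached, where the residual preference subgraph must be argued to be trivial. To strengthen the conclusion from neighborhood stability to full (unrestricted) stability when $\gamma=0$, I would additionally exploit that any out-degree-zero agent has utility zero under every assignment and therefore never envies anyone; placing such sinks last guarantees they belong to no blocking pair, adjacent or distant. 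Tracking the topological order and charging polynomial work to each of the $n$ placements yields the stated running time.

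For the general statement I would first spend leaves on the cyclic part. Fix a minimum directed feedback vertex set $X$ with $|X|=\gamma$; since the number of leaves $\ell$ satisfies $\ell>\gamma$, I can earmark $\gamma$ distinct leaves for the agents of $X$ and aim to seat each $x\in X$ at a leaf whose unique neighbor $x$ approves, giving $x$ its maximal attainable utility of one. Such an agent can never strictly improve, hence never envies, and so lies in no blocking pair. After deleting $X$ together with its $\gamma$ leaf-seats, the residual preference graph is acyclic and at least one leaf remains, so the free-source argument of the second paragraph finishes the assignment.

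The hard part will be the bookkeeping that a leaf is available at \emph{every} step rather than only $\gamma+1$ times: a source placement consumes a leaf just as a feedback-vertex placement does, so I must combine leaf regeneration with the slack $\ell>\gamma$ to show the strip never stalls while arcs remain, and I must ensure the $\gamma$ feedback-vertex agents can each be seated at a leaf they approve without two of them contending for the same leaf or the same neighbor. A second delicate point is the upgrade to full stability in the $\gamma=0$ case, since the free-source lemma only excludes \emph{adjacent} blocking pairs; to rule out distant ones I expect to rely on the sink-based placement order and a separate argument that an agent pinned at maximum utility (or a sink of utility zero) is globally immune to forming a blocking pair.
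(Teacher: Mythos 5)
Your proposal has two genuine gaps, both of which the paper's actual argument avoids by using a different mechanism for the acyclic part. First, your treatment of the DAG case relies on a leaf of the (residual) seat graph being available at every stripping step. The hypothesis only guarantees that the number of leaves exceeds $\gamma$, so for $\gamma=0$ you may have exactly one leaf; after seating one source there, the remaining seat graph can be leafless (e.g., a pendant vertex attached to a cycle or clique), and the residual preference DAG is by no means trivial. You flag this yourself (``the residual preference subgraph must be argued to be trivial''), but no such argument exists. The paper sidesteps leaves entirely for the acyclic part: it orders the agents of $\mathcal{A}\setminus X$ by repeatedly extracting a \emph{sink} of the remaining preference DAG and then runs a serial dictatorship in that order, letting each $s_i$ pick the available seat maximizing the number of its approved neighbors. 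Since every agent $s_i$ approves only agents seated before it, the seat $\pi(s_j)$ of any later agent was available when $s_i$ chose, so $u_{s_i}(\pi)\geq u_{s_i}(\pi^{s_i\leftrightarrow s_j})$ and $s_i$ never envies $s_j$. This works on \emph{any} seat graph, immediately gives full (not just neighborhood) stability when $X=\emptyset$, and runs in polynomial time.

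Second, your handling of $X$ requires seating each $x\in X$ at a leaf whose unique neighbor $x$ approves, so that $x$ attains its ``maximal attainable utility'' and never envies. This is not achievable in general: the occupant of that neighboring seat is not determined until the rest of the assignment is fixed, two feedback vertices may contend for the same favorable leaf, and $x$ may approve no agent that can be placed there. It is also unnecessary. The correct (and much simpler) argument rules out the blocking pair from the other side: if $x_i\in X$ sits at a leaf and $x_j$ is its unique neighbor, then after swapping $x_j$ occupies the leaf and its only neighbor is $x_i$, so $u_{x_j}(\pi^{x_i\leftrightarrow x_j})=\mathbb{I}[x_j\rightarrow x_i]\leq u_{x_j}(\pi)$; hence $x_j$ never envies $x_i$ and no adjacent blocking pair involves $X$, regardless of whom $x_i$ approves or what utility $x_i$ receives. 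Your free-source lemma is correct as stated, but it is the wrong workhorse here; combined with the two issues above, the proposal as written does not establish the theorem.
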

\begin{proof}
	Let $X$ be the smallest DFVS in $\mathcal{P}$, and $L$ be the set of leaf nodes in the seat graph $G$.  We first assign agents in $X$ to vertices in $L$. Since $|X|\leq |L|$ by the assumption of the theorem such a partial assignment is possible. 
	
	Note that the remaining agents $\mathcal{A}\setminus X$ induce an acyclic subgraph of $\mathcal{P}$. We denote $\mathtt{sink}(\mathcal{D})$ as the sink node in an acyclic preference graph $\mathcal{D}$. We also let $\mathcal{P}[S]$ be the subgraph induced in $\mathcal{P}$ by $S\subseteq \mathcal{A}$.
	
	We will now describe the assignments of the remaining agents who have not yet been assigned seats. Let $s_1 = \mathtt{sink}(\mathcal{P}[\mathcal{A} \setminus X])$, and recursively set $s_{\ell+1}=\mathtt{sink}\left(\mathcal{P}\left[\mathcal{A}\setminus \left( X \cup \bigcup_{i=1}^\ell s_i\right)\right]\right)$. We now let agents $s_1,..., s_{n-\gamma}$ pick their favorite seats in order, breaking ties arbitrarily. Let $\pi$ denote the corresponding assignment, and for  $S\subseteq \mathcal{A}$ we denote $\pi(S)$ as the set of vertices that agents in $S$ are assigned to under $\pi$. 
	
	Consider any pair of agents $s_i,s_j\in \mathcal{ A} \setminus X$ 
	with $i<j$. We now argue that $s_i$ does not envy $s_j$. 
	The vertex $\pi(s_j)$  was 
	available when it was agent $s_i$'s turn to choose. Since  $s_{i}=\mathtt{sink}
	\left(\mathcal{P}\left[\mathcal{A}\setminus \left( X 
	\cup \bigcup_{r=1}^{i-1} s_r\right)\right]\right)$,  
	we have that the set of agents $H_i$ that $s_i$ 
	approves  satisfies $H_i \subseteq  X \cup\{s_1,...,s_{i-1}\}$ and 
	$$\pi(s_i)=\argmax\limits_{v\in  V\setminus \pi(X \cup\{s_1,...,s_{i-1}\})}  |N(v)\cap \pi(H_i)| .$$
	Finally, it follows that
	\begin{align*}
		u_{s_i}(\pi)&=|N(\pi(s_i))\cap \pi(H_i)| \\
		&\geq |N(\pi(s_j))\cap \pi(H_i)| \\
		&= u_{s_i}(\pi^{ s_i\leftrightarrow s_j}) .
	\end{align*}
	
	This means that any blocking pair must contain at least one agent from $X$. When $\gamma =0$ (i.e., $\mathcal{P}$ is a DAG), we have that $X=\emptyset$ and thus there are no blocking pairs, regardless of the distance between agents. Since we can check for acyclicity in directed graphs in polynomial time, it follows that our algorithm yields a stable assignment in polynomial time.     
	
	Suppose $x_i,x_j$  form a blocking pair that is adjacent in the seat graph under $\pi$. At least one of the two must be in $X$, say $x_i$. Since $\pi(x_i)$ is a leaf in $G$, we have that $x_j$ cannot envy $x_i$ since by swapping, $x_j$'s only neighbor becomes $x_i$. 
\end{proof}

As an example of how Theorem~\ref{thm:sufficient} can be applied, consider planar preference graphs.
Since planar directed graphs have a DFVS number of at most $\frac35 n$ \citep{Bo79}, Theorem~\ref{thm:sufficient} implies that any seat graph where at least $\frac35$ of the nodes are leaves will admit a neighborhood stable assignment.
An example of such a seat graph is a full, $m$-ary tree with $m\geq 3$, which is guaranteed to have at least $\frac{m-1}{m}n$ leaves.
Though we have given a very specific application of Theorem~\ref{thm:sufficient}, we note that the theorem provides a general condition that can be applied across broad graph classes.

\section{Conclusion} 

In this paper, we have initiated the study of neighborhood stability in problems where we assign agents to vertices of graphs.
We prove that for cycle and path seat graphs, a neighborhood stable assignment can be computed in polynomial time.
En route, we demonstrate a connection between path partitions and neighborhood stable assignments.

A natural next step is to identify other seat graph restrictions for which existence of a neighborhood stable assignment is guaranteed for all preference graphs, such as seat graphs which are grids or trees.
More generally, results in the style of Theorem~\ref{thm:sufficient} will help illuminate the structure of instances that always admit a neighborhood stable assignment.

Another interesting direction is to establish how neighborhood stability affects efficiency concepts.
In this vein, future work could bound the price of neighborhood stability, or alternatively, compute outcomes which maximize social welfare subject to neighborhood stability.
Lastly, while the results in this paper pertain to binary preferences, it remains to be seen whether they extend to more general preferences.
In the case of the cycle, an impossibility arises when extending to non-negative cardinal preferences, but this question remains open for the path seat graph.

\section*{Acknowledgments}

This project has received funding from the European 
    Research Council (ERC) under the European Union’s Horizon 2020 
    research and innovation programme (grant agreement No 101002854). Mashbat Suzuki is supported by the ARC Laureate Project FL200100204 on ``Trustworthy AI''.
    
	\begin{center}
    \noindent \includegraphics[width=3cm]{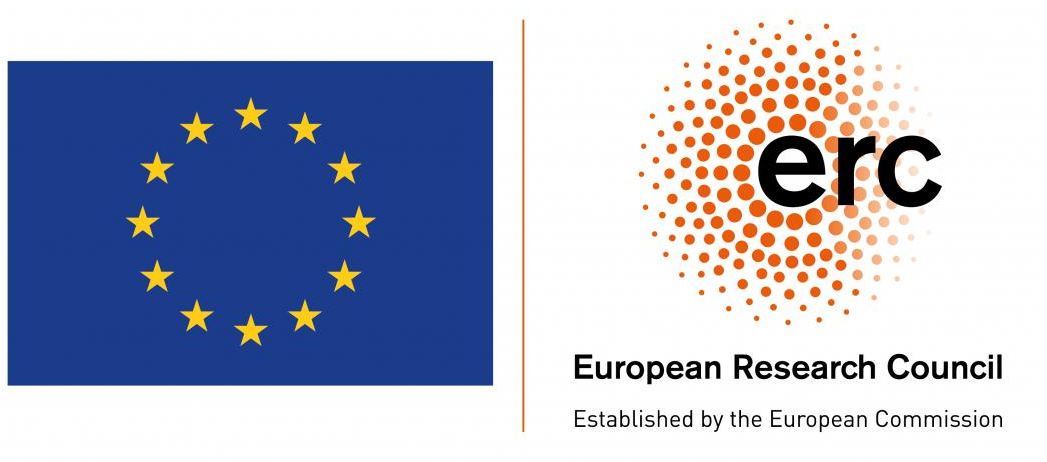}

	\end{center}

\bibliographystyle{plainnat}
\bibliography{sample.bib}

\begin{thebibliography}{23}
\providecommand{\natexlab}[1]{#1}
\providecommand{\url}[1]{\texttt{#1}}
\expandafter\ifx\csname urlstyle\endcsname\relax
  \providecommand{\doi}[1]{doi: #1}\else
  \providecommand{\doi}{doi: \begingroup \urlstyle{rm}\Url}\fi

\bibitem[Aziz and Savani(2016)]{AzSa15a}
H.~Aziz and R.~Savani.
\newblock Hedonic games.
\newblock In F.~Brandt, V.~Conitzer, U.~Endriss, J.~Lang, and A.~D. Procaccia,
  editors, \emph{Handbook of Computational Social Choice}, chapter~15.
  Cambridge University Press, 2016.

\bibitem[Berriaud et~al.(2023{\natexlab{a}})Berriaud, Constantinescu, and
  Wattenhofer]{berriaud2023stable}
Damien Berriaud, Andrei Constantinescu, and Roger Wattenhofer.
\newblock Stable dinner party seating arrangements.
\newblock In Jugal Garg, Max Klimm, and Yuqing Kong, editors, \emph{Web and
  Internet Economics - 19th International Conference, {WINE}}, volume 14413,
  pages 3--20, 2023{\natexlab{a}}.

\bibitem[Berriaud et~al.(2023{\natexlab{b}})Berriaud, Constantinescu, and
  Wattenhofer]{berriaud2023stable1}
Damien Berriaud, Andrei Constantinescu, and Roger Wattenhofer.
\newblock Stable dinner party seating arrangements.
\newblock \emph{arXiv preprint arXiv:2305.09549}, 2023{\natexlab{b}}.

\bibitem[Bil{\`o} et~al.(2022)Bil{\`o}, Bil{\`o}, Lenzner, and
  Molitor]{bilo2022topological}
Davide Bil{\`o}, Vittorio Bil{\`o}, Pascal Lenzner, and Louise Molitor.
\newblock Topological influence and locality in swap schelling games.
\newblock \emph{Autonomous Agents and Multi-Agent Systems}, 36\penalty0
  (2):\penalty0 47, 2022.

\bibitem[Bodlaender et~al.(2020)Bodlaender, Hanaka, Jaffke, Ono, Otachi, and
  van~der Zanden]{10.5555/3398761.3398979}
Hans~L. Bodlaender, Tesshu Hanaka, Lars Jaffke, Hirotaka Ono, Yota Otachi, and
  Tom~C. van~der Zanden.
\newblock Hedonic seat arrangement problems.
\newblock In \emph{Proceedings of the 19th International Conference on
  Autonomous Agents and MultiAgent Systems}, page 1777–1779, 2020.

\bibitem[Bogomolnaia and Moulin(2004)]{bogomolnaia2004random}
Anna Bogomolnaia and Herv{\'e} Moulin.
\newblock Random matching under dichotomous preferences.
\newblock \emph{Econometrica}, 72\penalty0 (1):\penalty0 257--279, 2004.

\bibitem[Borodin(1979)]{Bo79}
Oleg~V. Borodin.
\newblock On acyclic colorings of planar graphs.
\newblock \emph{Discret. Math.}, 25\penalty0 (3):\penalty0 211--236, 1979.
\newblock \doi{10.1016/0012-365X(79)90077-3}.
\newblock URL \url{https://doi.org/10.1016/0012-365X(79)90077-3}.

\bibitem[Bullinger and Suksompong(2024)]{bullinger2024topological}
Martin Bullinger and Warut Suksompong.
\newblock Topological distance games.
\newblock \emph{Theoretical Computer Science}, 981:\penalty0 114238, 2024.

\bibitem[Cechl{\'a}rov{\'a}(2008)]{Cech08a}
K.~Cechl{\'a}rov{\'a}.
\newblock Stable partition problem.
\newblock In \emph{Encyclopedia of Algorithms}, pages 885--888. Springer, 2008.

\bibitem[Cechl{\'{a}}rov{\'{a}}(2002)]{cechlarova2002complexity}
Katarina Cechl{\'{a}}rov{\'{a}}.
\newblock On the complexity of exchange-stable roommates.
\newblock \emph{Discrete Applied Mathematics}, 116\penalty0 (3):\penalty0
  279--287, 2002.

\bibitem[Cechl{\'a}rov{\'a} and Manlove(2005)]{cechlarova2005exchange}
Katar{\'\i}na Cechl{\'a}rov{\'a} and David~F Manlove.
\newblock The exchange-stable marriage problem.
\newblock \emph{Discrete Applied Mathematics}, 152\penalty0 (1-3):\penalty0
  109--122, 2005.

\bibitem[Ceylan et~al.(2023)Ceylan, Chen, and Roy]{DBLP:conf/ijcai/Ceylan0R23}
Esra Ceylan, Jiehua Chen, and Sanjukta Roy.
\newblock Optimal seat arrangement: What are the hard and easy cases?
\newblock In \emph{Proceedings of the Thirty-Second International Joint
  Conference on Artificial Intelligence, {IJCAI}}, pages 2563--2571, 2023.

\bibitem[Elkind et~al.(2020)Elkind, Patel, Tsang, and Zick]{EPTZ20a}
E.~Elkind, N.~Patel, A.~Tsang, and Y.~Zick.
\newblock Keeping your friends close: Land allocation with friends.
\newblock In \emph{Proceedings of the {29th} IJCAI}, pages 318--324, 2020.

\bibitem[Gale and Shapley(1962)]{GaSh62a}
D.~Gale and L.~S. Shapley.
\newblock College admissions and the stability of marriage.
\newblock \emph{The American Mathematical Monthly}, 69\penalty0 (1):\penalty0
  9--15, 1962.

\bibitem[Halpern et~al.(2020)Halpern, Procaccia, Psomas, and
  Shah]{halpern2020fair}
Daniel Halpern, Ariel~D Procaccia, Alexandros Psomas, and Nisarg Shah.
\newblock Fair division with binary valuations: One rule to rule them all.
\newblock In \emph{Web and Internet Economics: 16th International Conference,
  WINE 2020, Beijing, China, December 7--11, 2020, Proceedings 16}, pages
  370--383. Springer, 2020.

\bibitem[Irving(1985)]{irving1985efficient}
Robert~W Irving.
\newblock An efficient algorithm for the “stable roommates” problem.
\newblock \emph{Journal of Algorithms}, 6\penalty0 (4):\penalty0 577--595,
  1985.

\bibitem[Irving(1994)]{irving1994stable}
Robert~W Irving.
\newblock Stable marriage and indifference.
\newblock \emph{Discrete Applied Mathematics}, 48\penalty0 (3):\penalty0
  261--272, 1994.

\bibitem[Lackner and Skowron(2023)]{LaSk23a}
Martin Lackner and Piotr Skowron.
\newblock \emph{Multi-{{Winner Voting}} with {{Approval Preferences}}}.
\newblock {{SpringerBriefs}} in {{Intelligent Systems}}. {Springer
  International Publishing}, {Cham}, 2023.
\newblock ISBN 978-3-031-09015-8 978-3-031-09016-5.
\newblock \doi{10.1007/978-3-031-09016-5}.

\bibitem[Manlove(2013)]{manlove2013algorithmics}
David Manlove.
\newblock \emph{Algorithmics of matching under preferences}, volume~2.
\newblock World Scientific, 2013.

\bibitem[Massand and Simon(2019)]{DBLP:conf/ijcai/MassandS19}
Sagar Massand and Sunil Simon.
\newblock Graphical one-sided markets.
\newblock In \emph{Proceedings of the Twenty-Eighth International Joint
  Conference on Artificial Intelligence, {IJCAI}}, pages 492--498, 2019.

\bibitem[Schelling(1969)]{Schelling1969}
Thomas~C. Schelling.
\newblock Models of segregation.
\newblock \emph{The American Economic Review}, 59\penalty0 (2):\penalty0
  488--493, 1969.
\newblock ISSN 0002-8282.
\newblock URL \url{http://www.jstor.org/stable/1823701}.

\bibitem[Schelling(1971)]{Schelling1971}
Thomas~C. Schelling.
\newblock Dynamic models of segregation.
\newblock \emph{The Journal of Mathematical Sociology}, 1\penalty0
  (2):\penalty0 143--186, 1971.
\newblock \doi{10.1080/0022250X.1971.9989794}.

\bibitem[Wilczynski(2023)]{wilczynski2023ordinal}
Ana{\"e}lle Wilczynski.
\newblock Ordinal hedonic seat arrangement under restricted preference domains:
  swap stability and popularity.
\newblock In \emph{Proceedings of the Thirty-Second International Joint
  Conference on Artificial Intelligence}, pages 2906--2914, 2023.

\end{thebibliography}

\appendix

\section{Proofs from Section~\ref{sec:cycle_path}}
%\swapDyn*

	\textbf{Proof of Proposition \ref{prop: SwapDyn}}
\begin{proof}
	Consider an instance with four agents $\{a,b,c,d\}$ and a seat graph and preference graph as depicted in Figure \ref{fig:oscilation}. 
	Let us now consider the assignment $\pi$ shown in the right side of Figure \ref{fig:oscilation} and demonstrate that swap dynamics cannot converge starting from $\pi$.
	
	\begin{figure}[h]
		\centering
		\scalebox{0.7}{
			\begin{tikzpicture}
				[->,shorten >=1pt,auto,node distance=1.5cm, semithick]
				\node[shape=circle] (A)  {\huge a};
				\node[shape=circle, below of=A] (B)  {\huge b};
				%\node[shape=circle,draw=black, fill=black, right of =A] (C)  {c};
				\node[shape=circle, right of =A] (C)  {\huge c};
				%\node[shape=circle,draw=black, fill=black, below of =C] (D)  {d};
				\node[shape=circle, below of =C] (D)  {\huge d};
				\node[shape=circle, below of =D, yshift=1em] (D')  {};
				
				\draw [thick,->] (A) to  (B);
				\draw [thick,->] (B) to  (D);
				\draw [thick,->] (D) to  (C);
				\draw [thick,->] (C) to  (A);
				\draw [thick,->] (A) to  (D);
		\end{tikzpicture}}
		~
		\hspace{4em}
		\vspace{-0em}
		\scalebox{0.7}{
			\begin{tikzpicture}
				[->,shorten >=1pt,auto,node distance=1.5cm, semithick]
				\node[shape=circle,draw=black, fill=black] (A)  {};
				
				\node[shape=circle, above of=A, yshift=-3em] (A')  {\Large a};
				
				\node[shape=circle, draw=black, fill=black, below of=A] (B)  {};
				\node[shape=circle, below of=B, yshift=2.7em] (B')  {\Large d};
				
				\node[shape=circle,draw=black, fill=black, right of =A] (C)  {};
				
				\node[shape=circle, above of=C, yshift=-3em] (C')  {\Large b};
				%\node[shape=circle, right of =A] (C)  {\huge c};
				\node[shape=circle,draw=black, fill=black, below of =C] (D)  {};
				\node[shape=circle, below of =D, yshift=2.6em] (D')  {\Large c};
				
				\draw [thick,-] (A) to  (B);
				\draw [thick,-] (B) to  (D);
				\draw [thick,-] (D) to  (C);
				\draw [thick,-] (C) to  (A);
				%\draw [thick,-] (A) to  (D);
				
		\end{tikzpicture}}
		\caption{Example of a preference graph and an assignment from which swap dynamics never converge.}\label{fig:oscilation}
	\end{figure}
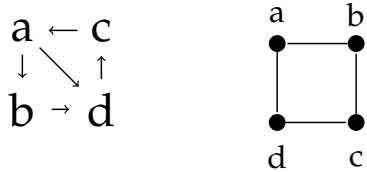
	
	First, observe that $N_{\pi}(c)=\{b,d\}$, while $N_{\pi}(b)=\{c,a\}$. Let now $\pi_1 = \pi^{b \leftrightarrow c}$. Then, $N_{\pi_1}(c)=\{b,a\}$, while $N_{\pi_1}(b)=\{c,d\}$. It follows that $u_c(\pi) = 0 $ while  $u_c(\pi_1) =1 $, and $u_b(\pi) = 0 $ while  $u_b(\pi_1) =1 $. So, $b$ and $c$ form a blocking pair. 
	
	Now, let $\pi_2 = \pi_1^{a \leftrightarrow d}$. Notice now that $N_{\pi_1}(a)=\{c,d\}$, while $N_{\pi_1}(d)=\{a,b\}$.  Also, $N_{\pi_2}(a)=\{b,d\}$, while $N_{\pi_2}(d)=\{a,c\}$. It follows that $u_a(\pi_1) = 1 $ while  $u_a(\pi_2) =2 $, and $u_d(\pi_1) = 0 $ while  $u_d(\pi_2) =1 $. So, $a$ and $d$ form a blocking pair. 
	
	Now, let $\pi_3 = \pi_2^{b \leftrightarrow c}$. Notice now that $N_{\pi_2}(c)=\{b,d\}$, while $N_{\pi_3}(c)=\{b,a\}$.  Also, $N_{\pi_2}(b)=\{c,a\}$, while $N_{\pi_3}(b)=\{c,d\}$. It follows that $u_c(\pi_2) = 0 $ while  $u_c(\pi_3) =1 $, and $u_b(\pi_2) = 0 $ while  $u_b(\pi_3) =1 $. So, $b$ and $c$ form a blocking pair. 
	
	Observe now that $\pi_3^{a \leftrightarrow d} = \pi $. Notice now that $N_{\pi_3}(a)=\{c,d\}$, while $N_{\pi}(a)=\{b,d\}$.  Also, $N_{\pi_3}(d)=\{a,b\}$, while $N_{\pi}(d)=\{a,c\}$. It follows that $u_a(\pi_3) = 1 $ while  $u_a(\pi) =2 $, and $u_d(\pi_3) = 0 $ while  $u_d(\pi) =1 $. So, $a$ and $d$ form a blocking pair. 
	
	This implies, however, that the swap dynamics from $\pi$ will cycle and thus do not converge.
	This result is regardless of which swap is selected throughout the swap dynamics, since it can be confirmed that each blocking pair identified above was the unique blocking pair in its respective assignment.
\end{proof}

\vspace{1em}
\noindent \textbf{Proof of Observation \ref{obs: cond1}}
%\obsCondOne*
\begin{proof}
	Let $i+1$ be the agent assigned to $v_{i+1}$ under $\pi$. If property (P1) holds, then
	\begin{align*}
		&u_{i+1}(\pi)=\mathbb{I}[\pi^{-1}(v_{i+1})\rightarrow \pi^{-1}(v_{i+2})]+\\
		&+\mathbb{I}[\pi^{-1}(v_{i+1})\rightarrow \pi^{-1}(v_{i})] \\ 
		& \geq \mathbb{I}[\pi^{-1}(v_{i+1})\rightarrow \pi^{-1}(v_{i-1})]+\mathbb{I}[\pi^{-1}(v_{i+1})\rightarrow \pi^{-1}(v_{i})] \\
		&= u_{i+1}(\pi^{i\leftrightarrow i+1}). 
	\end{align*}
	Using a similar argument, we can show that if (P2) holds, then $\pi^{-1}(v_i)$ does not envy $\pi^{-1}(v_{i+1})$. Thus, if either (P1) or (P2) holds, then $\pi^{-1}(v_i)$ and $\pi^{-1}(v_{i+1})$ cannot form a blocking pair.
\end{proof}

%\obsCondTwo*
\vspace{1em}
\noindent \textbf{Proof of Observation \ref{obs: cond2}}
\begin{proof} Let $i$ be the agent assigned to $v_{i}$ under $\pi$. Then,
	\begin{align*}
		&u_{i}(\pi^{i\leftrightarrow i+1}) =\mathbb{I}[\pi^{-1}(v_{i})\rightarrow \pi^{-1}(v_{i+1})]+ \\
		&+\mathbb{I}[\pi^{-1}(v_{i})\rightarrow \pi^{-1}(v_{i+2})] \\ 
		& \leq \mathbb{I}[\pi^{-1}(v_{i})\rightarrow \pi^{-1}(v_{i+1})] +1 \\
		&= \mathbb{I}[\pi^{-1}(v_{i})\rightarrow \pi^{-1}(v_{i+1})] +  \mathbb{I}[\pi^{-1}(v_{i})\rightarrow \pi^{-1}(v_{i-1})] \\
		&=u_{i}(\pi) .
	\end{align*}
	Hence, $\pi^{-1}(v_i)$ does not envy $\pi^{-1}(v_{i+1})$. A similar argument can be used to prove $\pi^{-1}(v_{i})$ does not envy $\pi^{-1}(v_{i-1})$ if $\pi^{-1}(v_{i})$ approves $\pi^{-1}(v_{i+1})$.
\end{proof}

\end{document}